\newcommand{\WHEN}{\WHILE}
\newcommand{\ENDWHEN}{\ENDWHILE}
\theoremstyle{plain}
\newtheorem{proposition}[theorem]{Proposition}
\newcommand{\ceil}[1]{\left\lceil#1\right\rceil}
\newcommand{\floor}[1]{\left\lfloor#1\right\rfloor}
\newcommand{\conceptFormat}[1]{\textbf{#1}}
\newcommand{\concept}[1]{\index{#1}\conceptFormat{#1}}
\def\size{\textit{c}}
\def\sizeT{\textit{c}}
\def\phase{\textit{phase}}
\def\cnt{\textit{cnt}}
\def\U{\textsl{U}}
\def\Agents { \mathcal{A}}
\def\n{n}
\newcommand{\INDSTATE}[1][1]{\STATE\hspace{#1\algorithmicindent}}
\title{Time and Space Optimal Counting in Population Protocols} 
\titlerunning{Time and Space Optimal Counting in Population Protocols} 
\author[1]{James Aspnes\thanks{The work of this author is supported in part by NSF grants CCF-1650596 and CCF-1637385.}}
\author[2]{Joffroy Beauquier}
\author[2]{Janna Burman\thanks{contact author}}
\author[3]{Devan Sohier}
\affil[1]{Yale University, USA\\
  \texttt{james.aspnes@gmail.com}}
\affil[2]{Universit\'{e} Paris Sud, LRI, France\\
  \texttt{\{joffroy.beauquier, janna.burman\}@lri.fr}}
\affil[3]{Universit\'{e} de Versailles, LI-PaRAD, France\\
  \texttt{devan.sohier@uvsq.fr}}
\authorrunning{J. Aspnes, J. Beauquier, J. Burman and D. Sohier} 
\subjclass{C.2.4 Distributed Systems, I.1.2 Algorithms}
\keywords{networks of passively mobile agents/sensors, population protocols, counting, anonymous non-initialized agents, time and space complexity, lower bounds, probabilstic/weak fairness}
\begin{document}

\maketitle

\begin{abstract}
This work concerns the general issue of combined optimality in terms of time and space complexity. In this context, we study the problem of (exact) \textit{counting} resource-limited and passively mobile nodes in the model of \textit{population protocols}, in which the space complexity is crucial. The counted nodes are memory-limited anonymous devices (called agents) communicating asynchronously in pairs (according to a \textit{fairness} condition).
Moreover, we assume that these agents are prone to failures so that they cannot be correctly initialized.

%

This study considers two classical fairness conditions, and for each we investigate the issue of time optimality of counting given the optimal space per agent. In the case of randomly interacting agents (\textit{probabilistic fairness}), as usual, the convergence time is measured in terms of \textit{parallel time} (or parallel interactions), which is defined as the number of pairwise interactions until convergence, divided by $\n$ (the number of agents). In case of \textit{weak fairness}, where it is only required that every pair of agents interacts infinitely often, the convergence time is defined in terms of \textit{non-null transitions}, i.e, the transitions that affect the states of the interacting agents.

%

First, assuming probabilistic fairness, we present a ``non-guessing'' time optimal protocol of $O(\n \log \n)$ expected time given an optimal space of only one bit, and we prove the time optimality of this protocol.
Then, for weak fairness, we show that a space optimal (\textit{semi-uniform}) solution cannot converge faster than in $\Omega(2^{\n})$ time (non-null transitions). This result, together with the time complexity analysis of an already known space optimal protocol, shows that it is also optimal in time (given the optimal space constrains).

 \end{abstract}

\section{Introduction}
\setcounter{footnote}{0}


In this paper we are interested in
the determination of the \emph{exact} number of nodes in a mobile sensor network. 
In the considered networks, sensors are typically attached to mobile supports (vehicles, animals, people, etc.) moving in an unpredictable way. Moreover, nodes may be deployed at large scale. Therefore they should be cheap and consequently can be prone to failures. One can think of sensors attached to 
zebras (ZebraNet \cite{zebra}), pigeons (Pigeon Air Patrol \cite{pigeonairpatrol_website}), or public transport vehicles (EMMA project \cite{DBLP:journals/wicomm/LahdeDPLW07}). In this context, counting can be part of the task being realized (How many animals have a temperature exceeding some bound?), or part of the managing of the network (Should some nodes be added or replaced?). In relation with the domain of application we are looking at, we consider that the nodes are anonymous, undistinguishable, have a bounded memory and poor communication capabilities (no broadcast; only a pairwise communication when two nodes come close enough to each other).
A distributed computing model that fits this description is the model of \emph{population protocols} (PP) \cite{DBLP:journals/dc/AngluinADFP06}.

In PP, mobile nodes, which are called agents, can be represented as finite state transition systems. One can imagine that, when two agents are close enough, they interact and the effect of the interaction is a transition with a possible change of states. In this work we study the case of \emph{symmetric protocols}, where two agents in a transition are indistinguishable if their states are identical (thus, their states have to be identical also after the transition). This assumption makes the protocol design more difficult than in the asymmetric case (as it restrains the set of possible transition rules), but provides a more general solution (correct in both cases).

The mobility and the resulting interactions of agents are completely asynchronous and modeled in a very general way - by a \emph{fairness} assumption. Here, we study the problem of counting considering two classical fairness assumptions. One ensures that each pair of agents is drawn uniformly at random for each interaction, and the other, weaker assumption (called here \emph{weak}), ensures only that every pair of agents interact infinitely often. While the probabilistic fairness captures the randomization inherent to many real systems, weak fairness only ensures progress of system entities (see Sec. \ref{sec:model} for an illustrating example). 

As the agents are likely to be cheap and prone to failures (memory corruption, crash failures, etc.), re-counting may be required frequently. Since the population of agents may be very large, re-initialization may be infeasible. Hence, it is natural to assume that the agents are not initialized (i.e., an agent can be initially in any possible state). However, it is easy to prove that, in this case, counting in PP is impossible \cite{Beauquier2007}. 
The solution is to use only 
one initialized and distinguishable agent, called the \emph{base station} (BST). In this work, for the first time, we also prove, the necessity of such an agent being distinguishable (in case of symmetric protocols; see Sec. \ref{sec:lower-bound-weak}). BST is also the agent that eventually obtains the correct count of the population. Thus the considered protocols are \emph{semi-uniform}, in the sense that all the agents, except BST, are a priori undistinguishable and execute the same protocol, for any population size $\n$ and upper bound $P$ on $\n$ (see Sec. \ref{sec:model} for a formal definition). 

In this context, previous works \cite{Beauquier2007,DBLP:journals/tcs/IzumiKIW14,beauquier:hal-00986109} and a companion paper \cite{DBLP:conf/wdag/BeauquierBCS15} study the issue of space complexity (of the counted nodes), a factor that is particulary important in the considered large-scale and unreliable networks.
For instance,  \cite{Beauquier2007} shows that under weak fairness, 
$P$ (or more) agents cannot be counted with strictly less than $P$ states per counted agent, by deterministic protocols (considered here as well). Here, as a by-product, we present an alternative proof of this result, for the case of symmetric protocols (see Proposition \ref{prop-count-weak-Ps}).
However, as shown in \cite{DBLP:conf/wdag/BeauquierBCS15}, under probabilistic or \emph{global fairness}\footnote{Global fairness can be viewed as simulating probabilistic one without introducing randomization explicitly. One can see probabilistic fairness as a quantitative version of the global one. Moreover, it allows to analyze protocols' time complexity, what is impossible in general with global fairness (see Sec.~\ref{sec:model}).}, counting can be performed with only two states (one bit) per counted agent.
\cite{DBLP:conf/wdag/BeauquierBCS15} presents two space optimal solutions to counting in PP, one under weak fairness and the other under global fairness (the latter is also correct under probabilistic fairness). The solution for global fairness uses a memory of only one bit per agent, while the solution for weak fairness needs $\log P$ bits ($P$ states) per counted agent.
In the latter case, as the number $P$ of states for a given counting protocol is fixed, it represents an upper bound on the size of populations to which the protocol applies, and is considered as an explicit parameter.

In addition to the state space optimality, this paper raises the issue of the convergence time. Our objective is to determine the best guarantees on the convergence times given the established necessary minimum on the memory. 
To obtain this goal, we show, in particular, that the convergence times of the two previous space optimal solutions (in \cite{DBLP:conf/wdag/BeauquierBCS15}) are exponential. 
In Sec. \ref{sec:probabilistic}, for reasons explained there, we restrict our attention to so called ``non-guessing'' counting protocols.
We prove that any such state optimal counting protocol, correct under probabilistic fairness, converges in $\Omega(n \log n)$ expected parallel interactions (Sec. \ref{sec:lower-bound-probabilistic}). In Sec. \ref{sec:time-space-opt-prot}, we propose a new state optimal protocol fitting this complexity.

In the case of weak fairness, in Sec. \ref{sec:lower-bound-weak}, we show that a space optimal solution requires an exponential convergence time (in terms of non-null transitions).
In particular, this result shows that the space optimal protocol under weak fairness in \cite{DBLP:conf/wdag/BeauquierBCS15} is time optimal among the space optimal semi-uniform protocols.


 \paragraph*{Related Work.} We provide here the most relevant and recent works. Please, refer to \cite{DBLP:conf/wdag/BeauquierBCS15} for additional related literature on the subject.

Considering PP, \cite{Beauquier2007,DBLP:journals/tcs/IzumiKIW14,beauquier:hal-00986109} proposed efficient counting protocols in terms of exact state space that were improved in \cite{DBLP:conf/wdag/BeauquierBCS15} by space optimal solutions. For weak fairness, the protocol proposed in \cite{DBLP:journals/tcs/IzumiKIW14} uses $2P$ states per agent (only 1 bit more than the optimal) and converges in only $O(\log n)$ asynchronous rounds (a round being a shortest fragment of execution where each agent interacts with each other). This presents an interesting trade-off for counting in PP. A recent work \cite{DBLP:conf/nca/MocquardAABS15} studies a problem related to counting in random PP, where agents should determine the difference between the number of agents started in state $A$ and the number of those started in state $B$. In contrast with the current work, \cite{DBLP:conf/nca/MocquardAABS15} assumes initialized agents, but similarly to the current work, it investigates the efficiency in terms of both time and space. It presents an $O(n^{3/2})$-state population protocol that allows each agent to converge to the exact solution by interacting no more than $O(\log n)$ times. Additional very recent works (as \cite{DBLP:journals/corr/AlistarhAEGR16,DBLP:conf/wdag/DotyS15,DBLP:conf/icalp/AlistarhG15,DBLP:conf/podc/AlistarhGV15}) jointly contribute to the time and space trade-offs study of fundamental tasks, as majority and leader election, in random PP. For example, \cite{DBLP:conf/wdag/DotyS15} shows that it is impossible to achieve sub-linear leader election with only constant state space per agent, but due to \cite{DBLP:conf/icalp/AlistarhG15} this problem can be solved in $O(\log^3 n)$ time with $O(\log^3 n)$ states. For majority, sub-linear time is impossible for protocols with at most four states per node, while there exists a poly-logarithmic time protocol which requires a linear in $n$ state space  \cite{DBLP:conf/podc/AlistarhGV15}. \cite{DBLP:journals/corr/AlistarhAEGR16} presents additional upper and lower bounds for these tasks that  
highlight a time complexity separation between $O(\log \log n)$ and $\Theta(\log^2 n)$ state space for both majority and leader election. The present work contributes to the general study of time-space trade-offs in the case of counting.

In the context of dynamic networks with anonymous nodes, recent works \cite{DBLP:conf/sss/MichailCS13,DBLP:conf/icdcs/LunaBBC14,DBLP:conf/icdcn/LunaBBC14,DBLP:conf/opodis/MilaniM15,DBLP:conf/podc/LunaB15}, study the counting problem in the \emph{synchronous} model of dynamic graphs. 
PP can be also represented by dynamic graphs, but is a completely asynchronous model. Moreover, in contrast with the current study, protocols in these works require that all nodes are initialized. These essential differences make the techniques (e.g., termination detection), extensively used there, inappropriate in our case. Note however that their protocols determining the exact count have exponential convergence time, except the results of \cite{DBLP:conf/podc/LunaB15} considering more restricted networks. In some sense this supports the results presented here for weak fairness. \cite{DBLP:conf/podc/LunaB15} presents interesting time complexity lower bounds for counting, but considers networks where anonymous nodes can communicate (by broadcast) \emph{any} amount of data and diffuse it to all other nodes in a \emph{constant} time w.r.t. $\n$, what is of course impossible in our context. 

Due to the difficulty of the problem, a lot of work has been devoted to design protocols counting \emph{approximately} the number of network nodes (see, e.g., \cite{DBLP:journals/pe/GkantsidisMS06,DBLP:journals/jss/KostoulasPGBD07,DBLP:journals/dc/GaneshKMM07,DBLP:conf/imc/RibeiroT10,DBLP:journals/tpds/BaqueroAMJ12,DBLP:conf/cdc/VaragnoloPS10,DBLP:journals/jpdc/AntaMT13}). These protocols use gossiping and probabilistic methods, like probabilistic polling, random walks, epidemic-based approaches, 
 and also exploit classical results on order statistics to infer an estimated number of the nodes. Here, we consider only deterministic protocols for exact counting.

\section{Model and Notations}\label{sec:model}
 A system consists of a collection $\Agents$ of pairwise interacting agents, also called a population. Each agent represents a finite state sensing and communicating mobile device. Among the agents, there is a distinguishable agent called the {\em base station} (BST), which can be as powerful as needed, in contrast with the resource-limited non-BST agents. The 
 non-BST agents are also called \emph{mobile}, interchangeably.
 The size of the population is the number of mobile agents, denoted by $\n$, and is unknown (a priori) to the agents.

A \textit{(population) protocol} can be modeled as a finite transition system whose states are called \textit{configurations}. A \textit{configuration} is 
a vector of states of all the agents.
Each agent has a state taken from a finite set of states, the same for all mobile agents (denoted $Q$), but generally different for BST.  

In this transition system, every transition $C{\rightarrow} C'$ between two configurations $C$ and $C'$ is modeled by a single \textit{transition} between two agents happening during an interaction. That is,
when two agents $x$, in state $p$, and $y$, in state $q$, in $C$, interact (meet),
they execute a transition rule $(p,q) \rightarrow (p', q')$. 
As a result, in $C'$, $x$ changes its state from $p$ to $p'$ and $y$ from $q$ to $q'$. If $p=p'$ and $q=q'$, the corresponding transition is called \textit{null} (such transitions are specified by default), and 
non-null otherwise.\footnote{For simplicity, in some cases,
we do not present protocols under the form of transition rules, but 
under the equivalent form of a pseudo-code.}
\\If there is a sequence of configurations $C = C_0,C_1, \ldots ,C_k = C'$, such that
$C_i \rightarrow C_{i+1}$ for all $i, 0 \le i < k$, we say that $C'$ \textit{is reachable} from
$C$, denoted $C\overset{*}{\rightarrow} C'$.

The transition rules of a protocol are \textit{deterministic}, if for every pair of states $(p, q)$, there is exactly one $(p', q')$ such that $(p, q) \rightarrow (p', q')$. We consider only deterministic transitions and thus, only \emph{deterministic protocols}. Transitions and protocols can be \textit{symmetric} or \textit{asymmetric}. Symmetric means that, if $(p, q) \rightarrow (p', q')$ is a transition rule, then $(q, p)\rightarrow (q', p')$ is also a transition rule. In particular, if $(p, p) \rightarrow (p', q')$ is symmetric, $p'= q'$. Asymmetric is the contrary of symmetric.

Let $(p_1,q_1) \rightarrow (p_2, q_2)$, $(p_2,q_2) \rightarrow (p_3, q_3)$, $\ldots$, $(p_{k-2},q_{k-2}) \rightarrow (p_{k-1},q_{k-1})$, $(p_{k-1},q_{k-1}) \rightarrow (p_k, q_k)$ be the transition rules of a protocol. Then, we shortly write $(p_1,q_1) \overset{*}{\rightarrow} (p_k, q_k)$ to denote a possible sequence of these transition rules, which can be applied (in the same order) on two agents in states $p_1$ and $q_1$, making them interact repeatedly until their states change to $p_k$ and $q_k$, respectively.
We sometimes call an agent in state $p$ a $p$-state agent, or just $p$ agent.
%

An \textit{execution} of a protocol is an infinite sequence of configurations $C_0, C_1, C_2, \ldots$ such that $C_0$ is the starting configuration and 
for each $i\ge 0$, $C_i\rightarrow C_{i+1}$. In a real distributed execution, interactions of distinct agents are independent and could
take place simultaneously (in parallel), but when writing down an execution we can order those
simultaneous interactions arbitrarily.

An execution is said \textit{weakly fair}, if every pair of agents in $\Agents$ interacts infinitely often.
An execution is said \textit{probabilistically fair}, if, for each interaction in the execution, a pair of agents in $\Agents$ is chosen uniformly at random.
An execution is said \textit{globally fair}, if for every two configurations $C$ and $C'$ such that $C\rightarrow C'$, if $C$ occurs infinitely often in the execution, then $C'$ also occurs infinitely often in the execution.
This also implies that, if in an execution there is an infinitely often reachable configuration, then it is infinitely often reached \cite{DBLP:journals/dc/AngluinAER07}.
Global fairness can be viewed as simulating randomized systems without introducing randomization explicitly (any probabilistically fair execution is globally fair with probability 1 \cite{JiangThesis2007}).

A simple example allows to understand better the difference between weak and global (or probabilistic) fairness. Consider a population of 3 agents. Each agent can be white or black, and initially one agent is black and the two others are white. Consider also the protocol in which, when two white agents interact, they become both black and when two agents of different colors interact, they exchange their colors. It is easy to see that there is an infinite weakly fair execution in which there is always one black and two white agents (the black color ``jump'' indefinitely from agent to agent). At the contrary, every globally fair execution terminates in a configuration in which the 3 agents are black, because otherwise there would be infinitely many configurations during an execution from which the ``all black'' configuration could be reached, without ever being reached (contradicting global fairness).

A \textit{problem} is defined by a predicate $\mathcal{D}$ on executions. 
A population protocol $\mathcal{P}$ is said to \textit{solve a problem} $\mathcal{D}$, if and only if every execution of $\mathcal{P}$ satisfies the conditions defining $\mathcal{D}$.
The problem of \textit{counting} is defined by the following condition: eventually, in any execution, there is at least one agent (BST, in our case) obtaining a value of $\n$ in some (estimate) variable (called $c$ in the following) and this value does not change.
Note that the counting predicate has to be satisfied only eventually (and forever after). When it happens,
we say that the protocol has \textit{converged}.
A protocol is called \textit{silent}, if in any execution, eventually, no state of an agent changes \cite{DBLP:journals/acta/DolevGS99}.

In the case of probabilistic fairness, the \textit{convergence time} of a protocol is measured in terms of \textit{parallel time} or \textit{parallel interactions}, i.e., the independent interactions (of distinct agents) occurring in parallel. It is customary to define one \textit{unit of parallel time} as $\n$ consecutive interactions in a probabilistically fair execution. Then, in this case, the convergence time of a protocol is defined by the expected number of parallel time units
in a fair execution till convergence. Moreover, under probabilistic fairness, it appears that technically (in the context of this specific work), we can perform the convergence time analysis in terms of transitions involving BST. It is easy to see that this corresponds to the (asymptotic) expected convergence time in terms of parallel time units.

In the case of weak fairness, the convergence time of a protocol is defined as the maximum number of non-null transitions in a fair execution till convergence.



We consider only \textit{semi-uniform} protocols (cf. \cite{DBLP:journals/dc/DolevIM93,Tel2000}) in the sense that all agents, except BST (whence semi-), are a priori indistinguishable and interact according to the same transition rules. Moreover, 
the protocol functions similarly for any $\n$ and any upper bound $P$ on $\n$.
More formally, we can define a semi-uniform protocol so:
\begin{definition}\label{def:semi-uniform}
A protocol $\mathcal{PP}$ is called \emph{semi-uniform} if for any upper bound $P$ and any execution prefix $e$ in which only agents from a subset $S\subset \Agents$ (including BST) interact, the (standard) projection\footnote{The projection of a configuration on a set of agents $S$ is a restriction of the vector representing the configuration to the elements corresponding to the agents of $S$. Naturally, the projection of an execution $e$ on a set of agents $S$ ($e|_S$) is obtained from $e$ by projecting every configuration of $e$ on $S$, representing an execution where the agents of $\overline{S}$ (the complement of $S$) do not interact.} $e|_S$ of $e$ on the agents of $S$ is an execution prefix of $\mathcal{PP}$ for any bound $P'$ such that $|S|\le P'<P$.
\end{definition}

\begin{remark}
Similarly, if $e$ is an execution prefix of a semi-uniform protocol $\mathcal{PP}$ for $n'\le P'$, it is also an execution prefix of $\mathcal{PP}$ for $n$ s.t. $n'\le n\le P$ and $P'\le P$, if we consider $e$ as an execution prefix for $p$, and then extend the configurations of $e$ with $n-n'$ agents (missing in $e$ and performing no interactions in the extended prefix).
\end{remark}



\section{Time and Space Optimal Counting under Probabilistic Fairness}\label{sec:probabilistic}
\subsection{Time Lower Bound for a Space Optimal Protocol}\label{sec:lower-bound-probabilistic}


Defining time optimality for a counting protocol asks to be cautious. Indeed, a protocol could be efficient for counting some set of agents and slow for counting others. Think of a protocol that ``guesses'' initially a count and checks afterwards whether this count is correct or not. On the right set of agents, this counting protocol would converge in zero time. For other sets it is certainly less efficient than the protocols which estimate the count gradually, starting from 0. We would like to avoid such behavior and thus restrict
%
our attention to protocols having always a ``proof'' that the estimate
they have
corresponds to a 
lower bound on the actual population size (i.e., they have observed a sequence of interactions that justifies this count).
For such protocols, called here ``non-guessing'', the estimate of the size (in the variable $c$) is a non-decreasing function along an execution (and so $c$ is always a lower bound on the number of agents in the population). In the sequel of the section, we consider an arbitrary state-optimal protocol $Count$ and we prove that it converges in expected $\Omega(n\log(n))$ interactions with BST, or equivalently $\Omega(n^2\log(n))$ interactions between agents, or $\Omega(n\log(n)$ parallel time. $Count$ uses (optimally) only two states per mobile agent (with one state, counting is impossible \cite{DBLP:conf/wdag/BeauquierBCS15}). Note also that the result in this section does not assume that $Count$ is symmetric.


We call \emph{trace of an execution prefix} the sequence of transitions of BST in this prefix. Thus, a trace $T$ is a sequence 
of transitions of the form $(s_{BST}, i)\rightarrow(s'_{BST}, j)$, where $s_{BST}$ and $s'_{BST}$ are states of BST, and $i$ and $j$ are states of a mobile agent in the corresponding interaction. Note that this sequence captures all the information that BST has, and completely determines its state.
Let us denote by $x(T)$ the minimal 
population size for which there exists at least one execution prefix with trace $T$.
Thus, since the estimate counter $c$ is non-decreasing, we have $c\leq x(T)$: if it was not the case, by definition of $x(T)$, the protocol could run with $x(T)$ agents and estimate, at some point of the execution, that $c>x(T)$; then, to converge, the protocol would be required to decrease $c$, which is impossible by the assumption above.
We denote by $x_i(T)$ the minimal number of mobile agents in state $i$ (w.l.o.g., $i\in\{0, 1\}$) in a configuration at the end of any possible execution prefix corresponding to trace $T$ (and with $x(T)$ agents).
%
%
Obviously, for all execution prefixes with $n$ agents and a trace $T$, we have $x_0(T)+x_1(T)\leq x(T)\leq n$.

The idea behind Lemma \ref{lem:tbound-rand1} is that, if there is such a transition rule of mobile agents that can decrease the number of agents in state $i$, then when several such agents are present in the population, making them interact and execute this transition will decrease their number to the minimum. The  proof appears in the appendix.

\begin{lemma}\label{lem:tbound-rand1}
If the considered protocol $Count$ has a transition rule that allows to decrease the number of agents in state $i$ through interactions between mobile agents (rules $(i, i)\rightarrow(i, 1-i)$, $(i, i)\rightarrow(1-i, 1-i)$ or $(i, 1-i)\rightarrow(1-i, 1-i)$), then for any trace $T$, we have $x_i(T)\leq 1$.
\end{lemma}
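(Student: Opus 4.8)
The plan is to exploit the single most useful structural fact about traces: since a trace records only the transitions in which BST participates, any interaction between two \emph{mobile} agents leaves the trace unchanged. Consequently, given any execution prefix with trace $T$ on $x(T)$ agents, we may freely append interactions between mobile agents and obtain again an execution prefix with the same trace $T$ on the same $x(T)$ agents. Since $x_i(T)$ is a minimum over such prefixes, it suffices to exhibit \emph{one} execution prefix with trace $T$ (on $x(T)$ agents) whose final configuration contains at most one agent in state $i$; I would get it by starting from an arbitrary trace-$T$ prefix and firing the assumed decreasing rule among mobile agents as a suffix.

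First I would dispose of the two ``spontaneous'' rules $(i,i)\to(i,1-i)$ and $(i,i)\to(1-i,1-i)$, whose only reactants are two agents already in state $i$. Starting from the final configuration of any trace-$T$ prefix, as long as at least two $i$-agents remain we let two of them interact via the rule; each such non-null transition strictly decreases the number of $i$-agents (by one, respectively two) and involves no BST, so the trace is preserved. The process halts only when at most one $i$-agent is left, giving $x_i(T)\le 1$ directly, and in fact for \emph{every} such prefix.

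The remaining and delicate case is the ``catalytic'' rule $(i,1-i)\to(1-i,1-i)$, which needs one agent already in state $1-i$ to reduce an $i$-agent. The first observation is that this rule never destroys a $(1-i)$-agent (the catalyst stays in $1-i$) and in fact only creates more of them; hence as soon as a single $(1-i)$-agent is present at the end of a trace-$T$ prefix, repeatedly applying the rule converts \emph{all} $i$-agents into $(1-i)$-agents, ending with zero $i$-agents. Thus the whole difficulty reduces to guaranteeing that some trace-$T$ prefix on $x(T)$ agents ends with a $(1-i)$-agent present. Here I would split into two cases. If \emph{no} execution prefix with trace $T$ ever contains a $(1-i)$-agent, then along every such prefix every mobile agent stays in state $i$, so each BST transition of $T$ reads and writes state $i$ on its partner; a single mobile $i$-agent interacting repeatedly with BST then reproduces $T$, whence $x(T)=1$ and $x_i(T)\le x(T)=1$. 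Otherwise some trace-$T$ prefix $e$ contains a $(1-i)$-agent at some configuration, and I would reroute $e$ so as to keep one such agent idle (as a catalyst, untouched by BST and by state-changing transitions) through the remaining BST interactions of $T$, producing a prefix that ends with a $(1-i)$-agent present; the reduction above then finishes the argument.

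The main obstacle is precisely this rerouting for the catalytic rule: I must complete the BST interactions of $T$ after the moment a $(1-i)$-agent becomes available while supplying, from the other $x(T)-1$ agents, exactly the mobile states that those BST transitions demand, without ever being forced to consume the reserved catalyst. The facts that make this tractable are that the step removing the last $(1-i)$-agent is necessarily \emph{not} the catalytic rule (which preserves $1-i$), so at most two agents leave state $1-i$ in that step, together with the minimality of $x(T)$, which lets me substitute agents of equal state, or else shrink the population and contradict minimality, whenever the needed $i$-states would otherwise be unavailable; the few smallest-population corner cases I would then check by hand.
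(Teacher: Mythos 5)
Your handling of the two ``spontaneous'' rules $(i,i)\rightarrow(i,1-i)$ and $(i,i)\rightarrow(1-i,1-i)$ is exactly the paper's argument (append mobile--mobile interactions as a trace-preserving suffix), and your observation that the whole difficulty of the catalytic rule $(i,1-i)\rightarrow(1-i,1-i)$ is to guarantee a $(1-i)$-agent at the end of some trace-$T$ prefix is also the right diagnosis. The degenerate sub-case (no prefix with trace $T$ ever contains a $(1-i)$-agent, hence $x(T)=1$) is fine.

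The gap is the step you yourself flag as ``the main obstacle'': the rerouting that reserves an idle $(1-i)$-catalyst through the remaining BST interactions of $T$ is not carried out, and it is not clear it can be, in the form you propose. The trace is fixed and must be realized with exactly $x(T)$ agents; if a later transition of $T$ is of the form $(s_{BST},1-i)\rightarrow(\cdot,\cdot)$ and your reserved agent is the only one in state $1-i$ at that moment, you are forced to consume it, and showing that some \emph{other} agent can be steered into state $1-i$ there is precisely the original problem again. The appeal to ``substitute agents of equal state, or else shrink the population and contradict minimality'' does not resolve this: substitution requires another agent already in the needed state, and minimality of $x(T)$ does not produce one. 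The paper avoids the reservation idea entirely and instead runs an induction along the trace: starting from $x(T)=x_i(T)=1$ (a single $i$-agent suffices as long as BST only sees transitions $(s_{BST},i)\rightarrow(s'_{BST},i)$), the first time $x$ must grow beyond $1$ the appended BST transition either reads state $1-i$ (so a $(1-i)$-agent is present \emph{before} it and can first catalyze away the $i$-agents) or writes state $1-i$ (so one is present after it); in either case one reaches $x_i(T')=0<x(T')$, and this invariant --- the minimal realization ends with no $i$-agents and hence with a $(1-i)$-agent available as catalyst --- is then propagated to every extension of the trace, flushing $i$-agents after each BST step rather than holding one catalyst in reserve for the whole suffix. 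To repair your proof you would need to replace the rerouting by such a step-by-step maintenance argument.
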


\begin{lemma}\label{lem:tbound-rand2}
Let $T'$ be a trace obtained from a trace $T$ by adding a transition $(s_{BST}, i)\rightarrow(s'_{BST}, j)$ between BST and a mobile agent. If $x(T')>x(T)$, then $x_{1-i}(T)=x(T)$ and $x(T')=x(T)+1$.
\end{lemma}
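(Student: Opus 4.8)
The plan is to pin down exactly when the one-step extension from $T$ to $T'$ forces a larger population. Since $T'$ extends $T$, every prefix realizing $T'$ also realizes $T$, so any population realizing $T'$ has at least $x(T)$ agents; in particular no population of size strictly below $x(T)$ can realize $T'$, and the hypothesis $x(T')>x(T)$ means precisely that no population of size exactly $x(T)$ realizes $T'$ either. The appended step $(s_{BST},i)\rightarrow(s'_{BST},j)$ is a BST-transition: since the trace $T$ completely determines BST's state, BST is automatically in state $s_{BST}$ at the end of any trace-$T$ prefix, so the only extra requirement for appending this step is the presence of a mobile agent in state $i$. Hence $T'$ is realizable on $x(T)$ agents if and only if some trace-$T$ prefix on $x(T)$ agents ends in a configuration containing at least one agent in state $i$.

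First I would derive $x_{1-i}(T)=x(T)$ by contraposition. Suppose $x_{1-i}(T)<x(T)$. Then, since mobile agents have only the two states $i$ and $1-i$ and there are exactly $x(T)$ of them, some trace-$T$ prefix on $x(T)$ agents ends with fewer than $x(T)$ agents in state $1-i$, hence with at least one agent in state $i$. Appending the valid transition $(s_{BST},i)\rightarrow(s'_{BST},j)$ to that prefix produces a prefix with trace $T'$ on $x(T)$ agents, giving $x(T')\le x(T)$ and contradicting the hypothesis. Therefore every trace-$T$ prefix on $x(T)$ agents ends with all $x(T)$ agents in state $1-i$, i.e. $x_{1-i}(T)=x(T)$ (equivalently $x_i(T)=0$).

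Next I would establish $x(T')=x(T)+1$. The lower bound $x(T')\ge x(T)+1$ is the hypothesis itself. For the matching upper bound I would exhibit a prefix realizing $T'$ on $x(T)+1$ agents: start from any prefix $e$ realizing $T$ on the $x(T)$ agents, and use non-initialization together with semi-uniformity (the Remark following Definition~\ref{def:semi-uniform}) to add one further agent that begins in state $i$ and never interacts during $e$. This extension leaves BST's sequence of interactions untouched, so the trace is still $T$, and at its end the added agent remains in state $i$; BST may then execute $(s_{BST},i)\rightarrow(s'_{BST},j)$ with it, yielding a prefix of trace $T'$ on $x(T)+1$ agents. Thus $x(T')\le x(T)+1$, which combined with the lower bound gives $x(T')=x(T)+1$.

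The step I expect to require the most care is the upper-bound construction, specifically the bookkeeping around semi-uniformity and non-initialization. One must argue that inserting the extra agent genuinely keeps the trace equal to $T$ (which holds because the new agent never interacts, so BST witnesses exactly the same interaction sequence as in $e$), and that letting this agent start in state $i$ is legitimate (which is exactly what the non-initialization assumption grants). Once these points are justified, the remainder reduces to the elementary two-state accounting that "fewer than $x(T)$ agents in state $1-i$" is equivalent to "at least one agent in state $i$."
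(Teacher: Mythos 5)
Your proof is correct and follows essentially the same route as the paper's: both argue by contraposition that $x_{1-i}(T)<x(T)$ would leave an agent in state $i$ available to realize the appended BST-transition on $x(T)$ agents, and both obtain the upper bound $x(T')\le x(T)+1$ by adding one non-interacting agent in state $i$ that is released only for the final interaction. The paper additionally splits into the cases $j=i$ and $j=1-i$ to track the resulting values of $x_i(T')$ and $x_{1-i}(T')$ for later use, but for the lemma as stated your argument covers everything needed.
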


\begin{proof}
We show the contrapositive: if $x_{1-i}(T)<x(T)$, then $x(T')=x(T)$; if $x_{1-i}(T)=x(T)$, then $x(T')= x(T)+1$.

Let the added interaction in $T'$ be $(s_{BST}, i)\rightarrow(s'_{BST}, i)$:
 \begin{itemize}
\item If $x_{1-i}(T)<x(T)$, some executions with trace $T$ and $x(T)$ agents lead to a configuration with $x(T)-x_{1-i}(T)>0$ agents in state $i$. These agents may interact with BST, so that there still exist executions with trace $T'$ and with $x(T)$ agents, and with the same number of agents in both states. Thus $x(T')=x(T)$, $x_0(T')=x_0(T)$, and $x_1(T')=x_1(T)$;
\item If $x_{1-i}(T)=x(T)$ (which implies that $x_i(T)=0$), all executions with trace $T$ and $x(T)$ agents contain no agent in state $i$. Thus, $x(T')=x(T)+1$ (it cannot be higher, since one can build an execution with $x(T)$ agents interacting in pattern resulting in trace $T$, and an extra agent in state $i$ that does not interact, then released for this last interaction). Since, as a result of this interaction, an agent in state $i$ remains, if no rule $(i, 1-i)\rightarrow(1-i, 1-i)$ can remove it, $x_i(T)=1$, else $x_i(T)=0$. In addition, $x_{1-i}(T')=x_{1-i}(T)$, because the execution described above with $x(T')=x(T)+1$ agents results in a configuration with $x_{1-i}(T)$ agents in state $i$. Finally, we have: $x(T')=x(T)+1$, $x_i(T')\leq 1$, and $x_{1-i}(T')=x_{1-i}(T)$.
\end{itemize}

Now, 
let the added interaction in $T'$ be $(s_{BST}, i)\rightarrow(s'_{BST}, 1-i)$:
\begin{itemize}
\item If $x_{1-i}(T)<x(T)$, some executions with trace $T$ and $x(T)$ agents contain agents in state $i$. These agents may interact with BST, so that there still exist executions with trace $T'$ and with $x(T)$ agents, and with an agent in state $i$ that has changed its state. Thus $x(T')=x(T)$, $x_i(T')=\max\{x_i(T)-1, 0\}$, and $x_{1-i}(T')\leq x_{1-i}(T)+1$.
\item If $x_{1-i}(T)=x(T)$ (which implies that $x_i(T)=0$), all executions with trace $T$ and $x(T)$ agents contain no agent in state $i$. Thus, trace $T'$ cannot be achieved with $x(T)$ agents, and $x(T')\geq x(T)+1$. $T'$ can be achieved by adding an extra agent in state $i$ during trace $T$ and releasing it to realize the last interaction, so that $x(T')=x(T)+1$. An agent in state $i$ has its state changed to $1-i$, so that $x_i(T')=\max\{x_i(T)-1, 0\}$, and $x_{1-i}(T')\leq x_{1-i}(T)+1$.
\end{itemize}

Thus, $x$ can increase only as the result of an interaction of BST with an agent in state $i$ such that $x_{1-i}=x$. After this interaction, to increment $x$ again, BST must increment $x_{1-i}$, which it can do only by switching an agent state to $1-i$.
\end{proof}

In particular, this implies that, if interactions between mobile agents can decrease the number of agents in both states 0 and 1, $x_i\leq 1$ for $i\in\{0, 1\}$, and $x\leq 2$, i.e., any trace can be obtained with two agents only, and $Count$ is incorrect.

\begin{theorem}
A two-state non-guessing counting protocol $Count$ (correct under probabilistic fairness) converges in  $\Omega(n\log n)$ 
expected interactions with BST (equivalently, in $\Omega(n\log n)$ expected parallel time).
\end{theorem}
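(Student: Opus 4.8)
The plan is to reduce the statement to a coupon‑collector argument on the interactions of BST. \emph{First}, I would use the observation immediately following Lemma~\ref{lem:tbound-rand2}, together with Lemma~\ref{lem:tbound-rand1}, to fix the combinatorial skeleton of any correct $Count$: since a protocol whose mobile interactions can decrease \emph{both} states has $x\le 2$ and is incorrect, at least one state, say $1-i_0$, is \emph{non‑decreasable} by mobile interactions; call it the \emph{counted} state and the other, $i_0$, the \emph{fresh} state. Lemma~\ref{lem:tbound-rand1} then gives $x_{i_0}(T)\le 1$, and Lemma~\ref{lem:tbound-rand2} shows that, for any trace with $x(T)\ge 2$, the only way to raise $x$ from $k$ to $k+1$ is an interaction in which BST meets a fresh‑state agent while all $k$ already‑counted agents are locked in the counted state (meeting a counted‑state agent would require $x_{i_0}(T)=x(T)$, hence $x\le 1$, and is excluded). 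In particular $x$ grows by at most one per BST interaction, so reaching the value $\n$ — which convergence requires, since $c\le x(T)\le \n$ — demands $\Omega(\n)$ BST interactions, each ``discovering'' one agent.

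\emph{Second}, and this is the heart of the argument, I would upgrade ``$\Omega(\n)$ discovering interactions'' to ``BST must interact with each of the $\n$ agents at least once.'' Since $Count$ must be correct for \emph{every} initial configuration, I may start the population in the worst case in which every mobile agent sits in the non‑decreasable counted state; by non‑decreasability this configuration is frozen under mobile interactions until BST itself injects a fresh‑state agent. The key claim is that if BST never interacts with some agent $A$, then the trace $T$ of the whole execution is realizable with only $\n-1$ agents, whence $x(T)\le \n-1<\n$ and $Count$ cannot converge to $\n$; this contradiction forces BST to meet every agent. The claim would be proved by deleting $A$ from the run and invoking semi‑uniformity (Definition~\ref{def:semi-uniform} and the remark following it) to reinterpret the projected execution as a genuine execution on $\n-1$ agents carrying the same trace.

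\emph{Third}, under probabilistic fairness the partner of BST in each of its interactions is uniform over the $\n$ mobile agents and independent across interactions, so the number of BST interactions needed to meet all $\n$ agents is exactly the coupon‑collector time, of expectation $\n H_\n=\Theta(\n\log \n)$. Since the convergence time dominates, in every execution, the time to meet all agents, the expected number of BST interactions until convergence is at least $\n H_\n=\Omega(\n\log \n)$. \emph{Finally}, I would translate units: an interaction involves BST with probability $2/(\n+1)$, so $\Omega(\n\log \n)$ expected BST interactions correspond to $\Omega(\n^2\log \n)$ expected total interactions, hence to $\Omega(\n\log \n)$ expected parallel time.

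I expect the main obstacle to be the deletion step of the second paragraph: because the fresh state is \emph{decreasable}, an agent $A$ that BST never meets may still take part in non‑null mobile interactions (absorbing, or via swaps relocating, fresh‑state tokens), so removing it naively alters the states later presented to BST and thus the trace. The crux is to show — using precisely the non‑decreasability of the counted state and the bound $x_{i_0}(T)\le 1$ of Lemma~\ref{lem:tbound-rand1} — that any such influence of $A$ is either null for its partners or reproducible by the remaining agents, so that the reduced $(\n-1)$‑agent execution indeed realizes $T$. This is exactly the point where the two‑state restriction, rather than coupon collecting per se, does the real work.
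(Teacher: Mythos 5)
There is a genuine gap, and you have in fact pointed at it yourself: your entire bound rests on the claim that a trace produced without BST ever meeting agent $A$ is realizable by $n-1$ agents, and your proposal defers exactly the step that would make this true. Deleting $A$ is not innocuous precisely because $A$ can take part in non-null mobile interactions (converting a partner toward the non-decreasable state, or swapping states with it), and when $A$ happens to be the only agent currently in the non-decreasable state, no remaining agent can reproduce that conversion, so the multiset of states presented to BST genuinely diverges in the reduced run. Ruling this out is not a routine verification; it amounts to the structural analysis the paper performs instead. The paper's proof never argues about distinct agent identities or deletions: using Lemma~\ref{lem:tbound-rand2} it isolates a state $i$ with $x_i(T')=x(T')=n-1$, then uses Lemma~\ref{lem:tbound-rand1} to conclude that mobile interactions can neither remove agents from state $i$ nor force them into it, so the run must pass through a configuration with all agents in state $1-i$, after which BST itself must flip them to $i$ one by one. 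The waiting time when $k$ agents have been flipped is $n/(n-k)$ because the supply of unflipped agents depletes, and the harmonic sum $nH_n$ falls out with no coupon-collector or indistinguishability step. Your target quantity is the same, but the paper's route avoids the very lemma your proof still owes.

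A second, more local error: from ``at least one state $1-i_0$ is non-decreasable by mobile interactions'' you conclude that Lemma~\ref{lem:tbound-rand1} gives $x_{i_0}(T)\le 1$. That lemma applies only to a state that \emph{is} decreasable by mobile interactions, and nothing forces the other state $i_0$ to be decreasable --- in the paper's own Protocol~\ref{alg:counting2s-time-opt} every mobile--mobile transition is null, so neither state is. Consequently your assertion that $x$ can be raised only when BST meets a fresh-state agent while all counted agents are locked is wrong in general (increments can occur through either state, as the alternating phases of Protocol~\ref{alg:counting2s-time-opt} illustrate), and the missing case split (exactly one decreasable state versus none) also changes which starting configurations are ``frozen'' and how the deletion would have to be argued. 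Neither issue changes the final answer, but both must be repaired before the proposal is a proof.
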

\begin{proof}
Consider a converging execution of this protocol with $n\geq3$ mobile agents. Denote by	$T$ the trace of this execution until $x(T)=n$ (which occurs, since the protocol converges), and by $T'$ the trace obtained from $T$ by removing its last interaction. Denote by $i$ the mobile agents state such that $x_i(T')=x(T')=n-1>1$ (this state exists by Lemma~\ref{lem:tbound-rand2}).

Thus, $x_i$ must increase from 0 to $n$. Now, $x_i$ increases only when BST meets an agent in state $1-i$ and changes its state to $i$, and the number of mobile agents in state $1-i$ can only increase in an interaction with BST (from Lemma~\ref{lem:tbound-rand1}, as $x_i(T')>1$, no interaction between mobile agents can increase the number of agents in state $1-i$). Thus, in the last configuration before convergence with $x_i=0$, all agents are in state $1-i$ (because $x_i$ eventually reaches $n$, and can increase only when a mobile agent state is switched from $1-i$ to $i$).

Thus, in any converging execution, there is a configuration in which all agents are in state $1-i$, and then all agents are switched to state $i$ in interactions with BST (except possibly one, that has been counted by BST, and will be switched to state $i$ by it only in a further interaction). Now, if $k$ agents are in state $i$, the probability for BST to meet an agent in state $1-i$ in its next interaction is $\frac {n-k}n$, and the expected number of interactions (involving BST) before meeting an agent in state $i$ and incrementing $x$ is $\frac n{n-k}$.

So, the expected length of an execution before convergence is $\geq\sum_{k=0}^{n-1}\frac{n}{n-k}=\sum_{l=1}^{n}\frac{n}{l}=n\sum_{l=1}^{n}\frac{1}{l}=nH_n$, with $H_n$ the $n$th harmonic number. It is known that $H_n=\Theta(\log n)$, hence the result.

Given that there are $n$ agents in the system, one interaction out of $n$ involves BST in average. Hence, the expected $\Omega(n\log n)$ interactions with BST are equivalent to expected $\Omega(n^2\log n)$ interactions between agents, and to the expected $\Omega(n\log n)$ parallel time.
\end{proof}

\subsection{Time and Space Optimal Protocol (Prot. \ref{alg:counting2s-time-opt})}\label{sec:time-space-opt-prot}

The one bit space optimal protocol of \cite{DBLP:conf/wdag/BeauquierBCS15} is recalled in the appendix (Prot. \ref{alg:counting2s}) together with its time complexity analysis that gives the average convergence time of $\Theta(2^\n)$ interactions. In this section, we modify Prot. \ref{alg:counting2s} to obtain an (asymptotically) time optimal protocol, Prot. \ref{alg:counting2s-time-opt}, converging in $O(n\log n)$ time, and still optimally using only one bit of memory per mobile agent. We present and prove this protocol and its convergence time below.

In Prot. \ref{alg:counting2s-time-opt}, each mobile agent can be in one of two states $0$ or $1$, and respectively called $0$ or $1$ agent. We write $c_0$ and $c_1$ for the protocol's count of $0$ and $1$ agents resp., and $n_0$ and $n_1$ for the actual number of $0$ and $1$
agents in the population.  
The total number of agents is then $n = n_0 +
n_1$ 
and the base station's
estimate of $n$ is $c = c_0 + c_1$.  The values $c_0$ and $c_1$ are
both initialized to $0$. They may be seen as the implementations of the $x_0$ and $x_1$ used in the lower bound proof in the previous subsection.

The modified protocol proceeds in alternating phases.  In a
\concept{zero phase}, BST only converts zeros to ones.
Whenever it does so, it decrements $c_0$ if it is positive and
increments $c_1$.
In a \concept{one phase}, it does the reverse.  We start the protocol
in a zero phase.

The same argument as
for the original protocol shows that $c_b \leq n_b$ holds as an
invariant and that $c = c_0 + c_1$ is non-decreasing over time (Lemma 1 in \cite{DBLP:conf/wdag/BeauquierBCS15}).  If, in
addition, we can stay in two phases long enough that every agent is
converted from $b$ to $1-b$ in the first phase, and then every agent is
converted from $1-b$ to $b$ in the second phase, at the end of the
second phase we will have $c = n$, giving convergence.

Let us now specify when BST switches
between phases.  Suppose that BST is going to start in a $b$ phase.
We adopt the following procedure (in two stages):
\begin{enumerate}
\item \emph{(pre-phase)} Flip any $b$ agents we encounter to $1-b$ as long as $c_b > 0$.
\item \emph{(the phase itself) }Continuing flipping any $b$ agents BST encounters to $1-b$ until it
sees $6 (c_b \ln c_b + 1)$ agents marked $1-b$ in a row without seeing an
agent marked $b$.  If this event occurs, flip the phase (switch to the $1-b$ phase).
\end{enumerate}

The first rule (the pre-phase) guarantees that whenever we start a $b$ phase,
$c_{1-b}$ is always zero.  This in turn guarantees that $c_b$ is never
lower at the start of a $b$ phase than it is at the start of any
previous $b$ phase.

\begin{algorithm}[h]
\caption{-- Time and Space Optimal Counting under Probabilistic Fairness}
\begin{algorithmic}
    \STATE
	\STATE \textbf{Variables at BST:}
	\INDSTATE $\size_0$: non-negative integer, initialized to $0$; eventually holds $n_0$
    \INDSTATE $\size_1$: non-negative integer, initialized to $0$; eventually holds $n_1$
	\INDSTATE $\sizeT$: non-negative integer initialized to $0$; eventually holds $n$
    \INDSTATE $\cnt$: non-negative integer initialized to $0$
    \INDSTATE $\phase$  $\in \{0,1\}$, initialized to $0$
	\STATE \textbf{Variable at a mobile agent $x$:}
	\INDSTATE $b$ $\in \{0,1\}$, initialized \emph{arbitrarily}
	\STATE
\end{algorithmic}

\begin{algorithmic}[1]

\WHEN{\underline{a mobile agent $x$ with mark $b$ interacts with BST}}
\IF{$b = \phase$}
    \STATE{$\cnt \leftarrow 0$}
	\IF{$\size_{b}>0$}
		\STATE $\size_{b} \leftarrow \size_{b}-1$\label{l-2s:size-mark--}
	\ENDIF
	\STATE $b \leftarrow 1 - b$ \label{l-2s:change-mark}
    \STATE $\size_{b} \leftarrow \size_{b}+1$ \label{l-2s:size-mark++}
\ELSIF{$\cnt \ge 6(\size_{b}\ln\size_{b}+1)$}
    \STATE{$\cnt \leftarrow 0$}
    \STATE{$\phase \leftarrow 1-\phase$}
\ELSIF{$\size_{\phase}=0$} 
    \STATE{$\cnt \leftarrow \cnt +1$}
\ENDIF
    \STATE $\sizeT \leftarrow \size_0+\size_1$ \label{l-2s:size-update}
\ENDWHEN
\end{algorithmic}\label{alg:counting2s-time-opt}
\end{algorithm}

For the convergence bound, begin by bounding the likely length of a phase:
\begin{lemma}
\label{lemma-phase-length}
Each phase requires $O(n \log n)$ parallel interactions with high probability.
\end{lemma}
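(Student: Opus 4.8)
The plan is to measure a phase's length in interactions that involve BST---the only non-null interactions, since mobile--mobile meetings do nothing in Prot.~\ref{alg:counting2s-time-opt}---and to convert to parallel interactions only at the very end. Fix a $b$-phase and let $n_b\le n$ be the number of agents marked $b$ when it begins. I would split the phase at the moment the \emph{last} $b$-agent is flipped to $1-b$: a \emph{coupon-collector part}, consisting of everything up to that moment (and containing the entire pre-phase), and a \emph{confirmation tail}, consisting of the remaining interactions, during which BST accumulates in $\cnt$ the run of consecutive $1-b$ sightings that triggers the phase switch. As the phase length is at most the sum of the two, it suffices to bound each part by $O(n\log n)$ interactions with BST.

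For the coupon-collector part, the key observation is that throughout a $b$-phase BST only ever maps $b\mapsto 1-b$, never the reverse, and mobile--mobile interactions are null; hence the number of $b$-agents is non-increasing and no $b$-agent is ever recreated. Flipping them all therefore amounts to collecting $n_b\le n$ distinct ``coupons'', where each interaction with BST draws a mobile agent uniformly at random. When $j$ un-flipped $b$-agents remain, an interaction with BST flips one with probability $j/n$, so the expected number of such interactions to flip them all is $n\,H_{n_b}=O(n\log n)$. For the high-probability version I would use a union bound: a fixed $b$-agent fails to be met within $t$ interactions with BST with probability at most $(1-1/n)^t\le e^{-t/n}$, which for $t=\Theta(n\log n)$ is at most $n^{-c}$; summing over the at most $n$ relevant agents shows that within $\Theta(n\log n)$ interactions with BST all $b$-agents have been flipped, except with probability at most $n^{-(c-1)}$.

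For the confirmation tail, I would note that the transition flipping the last $b$-agent resets $\cnt$ to $0$, and that afterwards no $b$-agent exists or can appear, so every subsequent interaction with BST meets a $1-b$ agent and increments $\cnt$ by one with no further reset. The run threshold has the form $6(c\ln c+1)$ for a count $c$ that never exceeds $n$ (being a number of agents), hence is at most $6(n\ln n+1)=O(n\log n)$; so $\cnt$ reaches it after $O(n\log n)$ additional interactions with BST, deterministically. A phase switch triggered prematurely during the coupon-collector part can only shorten the phase, so it is harmless for an upper bound: on the high-probability event that all $b$-agents are flipped within $\Theta(n\log n)$ interactions with BST, the phase has either switched already or else enters the confirmation tail and terminates within another $O(n\log n)$ interactions, so in every case it ends within $O(n\log n)$ interactions with BST.

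It remains to translate interactions with BST into parallel interactions. As already used in the lower-bound argument, BST participates in a $\Theta(1/n)$ fraction of interactions; more precisely, over $T$ units of parallel time (that is, $Tn$ interactions) the number involving BST is a sum of independent indicators with mean $\Theta(T)$, which concentrates by a Chernoff bound. Hence $\Theta(n\log n)$ interactions with BST are realized within $O(n\log n)$ parallel interactions with high probability, giving the claim. I expect the main obstacle to be precisely this passage from expectation to high probability: one has to combine the coupon-collector tail bound of the second step with the concentration of the number of BST interactions per unit of parallel time, arranging that the several failure probabilities remain $o(1)$ simultaneously; the rest is a clean two-part decomposition together with standard estimates.
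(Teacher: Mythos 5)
Your proof is correct and follows essentially the same route as the paper's: a coupon-collector bound showing BST meets all the relevant agents within $O(n\log n)$ of its interactions with high probability, followed by the observation that the confirmation run is capped at $6(n\ln n+1)=O(n\log n)$ further interactions. You simply make explicit some details the paper leaves implicit (the union bound, the fact that $\size_\phase=0$ once all $b$-agents are flipped via the invariant $c_b\le n_b$, and the Chernoff conversion from BST-interactions to parallel time).
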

\begin{proof}
Suppose we are in a $b$ phase.
Using standard bounds on the Coupon Collector Problem 
it holds with high probability that BST has interacted
with every agent after $O(n \log n)$ interactions.  So either the
phase has already ended, or every agent now carries $1-b$.  In the
latter case, the phase can run for at most $6 (n \ln n + 1) = O(n \log
n)$ interactions before ending.
\end{proof}

We now show that, on average, the protocol executes $O(1)$ phases.
This requires the following technical lemma showing that BST finds all $b$ agents in a $b$ phase if there are enough to
begin with.
\begin{lemma}
\label{lemma-all-flip}
If phase $b$ starts with $n_b \geq n/2$, then it ends with $n_b = 0$
with probability at least $1/2$.
\end{lemma}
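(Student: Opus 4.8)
The plan is to bound the probability that the phase flips \emph{prematurely} — while at least one $b$-agent still remains — and to show this is strictly below $1/2$. Throughout I would condition on the subsequence of interactions involving BST; under probabilistic fairness each has its mobile partner drawn uniformly among the $n$ agents, and the intervening mobile-mobile interactions are null and can be ignored. Let $n_b$ denote the number of $b$-agents \emph{at the start} of the phase. The key bookkeeping step concerns the threshold: in a $b$-phase BST flips every $b$-partner to $1-b$ (resetting $\cnt$) and, once $c_b=0$, increments $\cnt$ on every $1-b$-partner, flipping the phase as soon as $\cnt$ reaches $6(c_{1-b}\ln c_{1-b}+1)$. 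Since $c_{1-b}$ starts at $0$ (the pre-phase invariant) and is incremented by one on each flip while never decreasing during a $b$-phase, $c_{1-b}$ equals exactly the number of flips so far; hence when $j$ $b$-agents remain, $n_b-j$ flips have occurred, $c_{1-b}=n_b-j$, and the prevailing threshold is
\[
\tau_j \;=\; 6\bigl((n_b-j)\ln(n_b-j)+1\bigr).
\]

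Next I would decompose the phase into epochs, epoch $j$ being the maximal block of BST-interactions during which exactly $j$ $b$-agents remain (for $j$ descending from $n_b$). Within epoch $j$ the composition is fixed, so each interaction independently hits a $1-b$-agent with probability $(n-j)/n$; the epoch ends, and $\cnt$ resets, at the first $b$-partner. The phase ends with $n_b=0$ unless it flips during some epoch with $j\ge 1$, which requires a run of at least $\tau_j$ consecutive $1-b$-partners; this has probability at most $\bigl((n-j)/n\bigr)^{\tau_j}$. A union bound (together with the fact that the phase terminates almost surely) then gives
\[
\Pr[\,n_b>0 \text{ at the end}\,] \;\le\; \sum_{j=1}^{n_b}\left(\frac{n-j}{n}\right)^{\tau_j}.
\]

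Finally I would estimate this sum using $n_b\ge n/2$, which yields $1-n_b/n\le\tfrac12$ and $n\le 2n_b$. Writing $k=n_b-j$, the base satisfies $\frac{n-j}{n}\le\frac12+\frac{k}{n}$ while $\tau_j=6(k\ln k+1)$, and I would split at $k=n/4$. For $k\le n/4$ the base is at most $3/4$: the terms $k\in\{0,1\}$ stay near $(1/2)^6$ (the $+1$ in the threshold keeps the exponent at least $6$, while $n_b\ge n/2$ keeps the base near $1/2$), and the remaining terms decay faster than geometrically because the exponent grows like $k\ln k$. For $k>n/4$ the threshold already exceeds $6\cdot\frac{n}{4}\ln\frac{n}{4}$, so each term is at most $(n/4)^{-3/2}$ even for a base as large as $1-1/n$, and the at most $n$ such terms sum to $o(1)$. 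The total stays strictly below $1/2$ for $n$ above a small constant (the finitely many smaller populations being checked directly), which is exactly the claimed bound.

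The main obstacle is this last summation: one must confirm that the constant $6$ and the $+1$ hard-wired into the threshold are large enough to drive the union-bound total below $1/2$, and in particular to reconcile the two opposing regimes — small $k$, where the threshold is small but the per-interaction success probability $j/n$ is bounded away from $0$ precisely because $n_b\ge n/2$, and large $k$, where the base creeps toward $1$ but is overwhelmed by a threshold of order $k\ln k$.
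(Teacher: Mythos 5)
Your proposal follows essentially the same route as the paper's proof: track the number of flips $i=c_{1-b}$ made so far in the phase, union-bound over $i$ the probability of a run of $6(i\ln i+1)$ consecutive $1-b$-partners (which is $\left((n_1+i)/n\right)^{6(i\ln i+1)}$), and split the resulting sum into a small-$i$ regime where the base is at most $3/4$ and a large-$i$ regime where the $i\ln i$ exponent overwhelms a base close to $1$. The numerical verification you flag as the remaining obstacle is precisely the computation the paper relegates to its appendix, bounding the two parts by $2/5$ and $1/200$ respectively to get a total below $1/2$ for all $n>0$.
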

\begin{proof}
For simplicity we will assume $b=0$; the $b=1$ case is symmetric.  So
we are looking at a zero phase that starts with $n_0 \geq n/2$.  From the
structure of the protocol, we know that at the start of this phase,
$c_1 = 0$, but $c_0$ might be larger.  It happens that the worst
case is when $c_0 = 0$, but we will analyze the process for any
initial value of $c_0$.

In the analysis below we will fix $n_0, n_1$,
to their values at the start of the phase.  To keep track of
what happens, let $i$ be the number of zero values converted to ones
so far during this phase; given the value of $i$, this gives $n_0-i$
zeros and $n_1+i$ ones in the population, and the value of the $c_1$
register will be $i$.  We fail to convert all zeros to ones if we exit
the phase while $i$ is less than $n_0$.

For each particular value of $i$, this occurs only if (a) $c_0$ is
already $0$, and (b) BST observes $6 (n \ln n + 1)$ ones
in a row.  Whether or not $c_0 = 0$, the latter event occurs with probability exactly
\begin{equation}
\left(\frac{n_1+i}{n}\right)^{6 (i \ln i + 1)}
\end{equation}
which by the union bound gives an upper bound on the probability
that we leave the phase for any $i < n_0$ of
\begin{equation}
\sum_{i=0}^{n_0-1} \left(\frac{n_1+i}{n}\right)^{6 (i \ln i + 1)}
\end{equation}

We will bound this sum by considering the terms with $i < n_0/2$ and
$i \geq n_0/2$ separately. The detailed computations for each case appear in the appendix and give a bound of $2/5$ for the case $i < n_0/2$, and $1/200$ for $i \geq n_0/2$.
%
%
%
The original sum is thus bounded by $2/5+1/200 < 1/2$ for all
$n>0$, giving the claimed bound.
\end{proof}

\begin{theorem}
    \label{theorem-convergence}
    The modified protocol (Prot. \ref{alg:counting2s-time-opt}) converges to $c=n$ in an expected $O(n \log n)$ interactions with BST (equivalently, in an expected $O(n \log n)$ parallel interactions).
\end{theorem}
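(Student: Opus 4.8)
The plan is to express the number of interactions with BST before convergence as $\sum_{j=1}^{K} L_j$, where $K$ is the number of phases executed and $L_j$ is the number of BST-interactions during phase $j$, and to bound $E[K]=O(1)$ and $E[L_j\mid\text{past}]=O(n\log n)$ separately before combining them. For the per-phase bound I would work directly in the unit of interactions with BST (in which the theorem is stated). Since mobile--mobile interactions are null and, under probabilistic fairness, BST's partner is uniform over the $n$ mobile agents at each of its interactions, the number of BST-interactions needed for BST to meet every agent is a coupon-collector quantity of expectation $nH_n$, regardless of the configuration. Once BST has met everybody in a $b$ phase, every agent carries $1-b$ and $c_b=0$, so every subsequent BST-interaction increments $\cnt$ and the phase must terminate within at most $6(n\ln n+1)$ further interactions (the threshold $6(c_{1-b}\ln c_{1-b}+1)$ is at most this, as $c_{1-b}\le n$). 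Hence $E[L_j\mid\text{configuration at the start of phase }j]\le M:=nH_n+6(n\ln n+1)=O(n\log n)$ uniformly; this is an expectation version of Lemma~\ref{lemma-phase-length}, which is what the combination below requires.

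The crux is $E[K]=O(1)$, and I would first isolate the deterministic mechanism of convergence: two consecutive \emph{fully flipping} phases force $c=n$. Using the pre-phase guarantee established above (every $b$ phase starts with $c_{1-b}=0$) together with the invariants $c_b\le n_b$ and $c=c_0+c_1$ non-decreasing, I would track the registers through a $b$ phase that flips all $n_b$ agents to $1-b$: its pre-phase drives $c_b$ down to $0$ while raising $c_{1-b}$, its proper part then only raises $c_{1-b}$, so the phase ends with $c_b=0$, with $c_{1-b}$ equal to the number of agents flipped, and with all $n$ agents carrying $1-b$. If the following $1-b$ phase also flips everybody, the identical bookkeeping yields $c_b=n$ and $c_{1-b}=0$, i.e.\ $c=n$; since $c$ is non-decreasing and bounded by $n$, the protocol has converged.

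Next I would quantify how often such a double flip occurs. Call a phase \emph{favorable} if the number of agents it must flip is at least $n/2$. At the start of any phase, either it is favorable, or its source count is below $n/2$ and hence its target count exceeds $n/2$; since a phase only increases its target count, the \emph{next} phase is then favorable. So from any configuration a favorable phase is reached within at most one phase. By Lemma~\ref{lemma-all-flip} a favorable phase flips everybody with probability at least $1/2$; conditioned on this, the next phase starts with source count $n\ge n/2$, is favorable, and again flips everybody with probability at least $1/2$ by Lemma~\ref{lemma-all-flip} applied to its (now certainly favorable) start configuration. Grouping phases into blocks of three then shows that, conditioned on any non-converged configuration at a block's start, the block realizes a double flip --- and hence convergence --- with probability at least $1/4$. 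This uniform conditional bound gives $P(\text{not converged after }m\text{ blocks})\le(3/4)^m$, so the expected number of blocks is at most $4$ and $E[K]\le 12=O(1)$.

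Finally I would combine the two estimates. Since the event $\{K\ge j\}$ depends only on the history before phase $j$, while $E[L_j\mid\text{history}]\le M$ holds uniformly, a Wald-type conditioning argument gives $E\!\left[\sum_{j=1}^{K}L_j\right]=\sum_{j\ge 1}E\!\left[L_j\mathbf{1}_{\{K\ge j\}}\right]\le M\sum_{j\ge 1}P(K\ge j)=M\,E[K]=O(n\log n)$, the claimed bound (equivalently $O(n\log n)$ parallel interactions). The main obstacle I anticipate is not a single delicate estimate but two points requiring care: the register bookkeeping of the second paragraph, where one must check that the interleaving of pre-phase and proper-phase updates across two full flips drives $c$ to exactly $n$; and the probabilistic framing of the last two paragraphs, where the per-phase success probabilities and length bounds must be phrased as \emph{conditional} quantities, uniform over start configurations, so that the dependence between successive phases does not invalidate the geometric and Wald arguments.
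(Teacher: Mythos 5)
Your proof is correct and follows essentially the same route as the paper's: constant expected number of phases via Lemma~\ref{lemma-all-flip} (your blocks-of-three giving $E[K]\le 12$ versus the paper's two-case analysis giving $8$ or $9$), times an $O(n\log n)$ bound per phase. Your two refinements --- replacing the high-probability phase-length bound of Lemma~\ref{lemma-phase-length} by an explicit conditional-expectation bound $nH_n + 6(n\ln n+1)$, and combining via a Wald-type identity --- actually tighten a step the paper leaves informal, but they do not change the approach.
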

\begin{proof}
There are two cases, depending on the initial value of $n_0$.
\begin{enumerate}
    \item If $n_0 \geq n/2$ in the starting configuration, then $n_0 \geq n/2$ at the
        start of each zero phase.  From Lemma~\ref{lemma-all-flip},
        BST converts all zeros to ones in any of these
        phases with probability at least $1/2$.  If this event occurs,
        the following one phase converts all ones to zeros with
        probability at least $1/2$ as well, giving a probability of at
        least $1/4$ for each pair of phases that we converge to the
        correct count.  Thus the protocol converges in an expected
        $4\cdot 2 = 8$ phases.
    \item If $n_0 < n/2$, then the initial zero phase ends with at
        least $n/2$ ones (because any conversion during this phase
        can only increase the number of ones).  So the first one phase
        starts with $n_1 = n/2$.  Repeating the above analysis shows
        that the protocol converges after at most $8$ phases on
        average on top of the initial zero phase, giving an expected
        $9$ phases total.
\end{enumerate}
Because each of these $O(1)$ phases takes $O(n \log n)$ expected interactions
(Lemma~\ref{lemma-phase-length}), this gives a total expected number
of interactions of $O(n \log n)$.
\end{proof}

\section{Time Lower Bound for Space Optimal Counting under Weak Fairness}\label{sec:lower-bound-weak}

To obtain this lower bound we first prove properties that have to be satisfied by any space optimal symmetric counting protocol functioning under weak fairness. These properties are important by themselves, as they can be useful in future studies of counting under weak fairness in PP. For instance, 
Proposition \ref{prop-count-weak-property-naming}, states that a 
counting protocol has to distinctly name all the agents in any population of size $\n< P$. Recall that $P$ is the upper bound on the size $\n$ of the population.

Next, from Proposition \ref{prop-count-weak-property-naming} and Lemma \ref{lem-count-weak-property-sink1}, it easily follows that any symmetric counting protocol under weak fairness has to use at least $P$ states per mobile agent (to be able to count any population of at most $P$ agents). This gives a somewhat simpler proof than the original one in \cite{Beauquier2007}. 

The next important property, given in Proposition \ref{prop-count-weak-property-sink2}, is that any space optimal symmetric counting protocol under weak fairness has a unique ``sink'' state $m$ s.t., for every possible state $s\in Q$ of a mobile agent, there is a transition sequence $(s, s)\overset{*}{\rightarrow}(m,m)$, with $(m, m){\rightarrow}(m,m)$ and $m$ cannot be one of the distinct names given by the protocol in case $\n< P$.

The results above show in particular that, for any considered space-optimal counting protocol, if mobile agents are not named yet, agents in state $m$ will continue to appear (for any $P$ and $\n\le P$).

Moreover, recall that we consider counting with non-initialized mobile agents. In this case, to overcome the known impossibility \cite{Beauquier2007}, we assume one initialized and distinguishable agent BST that eventually counts the other $\n$ (mobile) agents. Note that having a distinguishable agent is necessary. To see this, consider a starting configuration where all agents start at the same state (which has to be equal to the state of the only initialized agent). If the size of the population is even, by Proposition \ref{prop-count-weak-property-sink2}, there is a weakly fair execution reaching and staying in the configuration where all agents are in the ``sink'' state $m$, and by Proposition \ref{prop-count-weak-property-naming}, no counting can be realized.

Using the above-mentioned properties and those of semi-uniform protocols (see Definition \ref{def:semi-uniform} in Sec. \ref{sec:model}), we prove the lower bound given in Theorem \ref{th-count-weak-tcomplexity}. This is one of the main results of the paper. It shows that, under weak fairness, counting undistinguishable, state-optimal and non-initialized agents in symmetric PP is a costly task in terms of a convergence time. It takes $\Omega(2^{\n})$ non-null transitions.  

Finally, we show (Proposition \ref{prop-2P:time} in the appendix)  that the space optimal protocol for weak fairness presented in \cite{DBLP:conf/wdag/BeauquierBCS15} converges in $\Theta(2^\n)$ non-null transitions. This proves that this is a time optimal protocol among all the space optimal semi-uniform protocols, under weak fairness.

\begin{proposition}\label{prop-count-weak-property-naming} Let $Count$ be a 
(silent or not) counting protocol correct under weak fairness (for any $\n\le P$). For any weakly fair execution $e = C_1, C_2, C_3, \ldots, C_j, \ldots$ of $Count$ on a population $\Agents$ of size $n< P$, there is an integer $k$ such that, for any $j\ge k$, no two mobile agents are in the same state in $C_j$.\end{proposition}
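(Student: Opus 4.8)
The plan is to argue by contradiction. Negating the statement means that there is a weakly fair execution $e$ on $n<P$ mobile agents in which, in infinitely many configurations, some two mobile agents share a state. Since there are only finitely many unordered pairs of agents and finitely many states, a pigeonhole argument fixes two particular agents $u,v$ and a single state $s$ such that $u$ and $v$ are both in state $s$ in infinitely many configurations $C_{t_1},C_{t_2},\dots$ of $e$; discarding finitely many of them, I may assume every $t_r$ occurs after $e$ has converged, so that BST holds the value $n$ at each $C_{t_r}$. The goal is then to manufacture out of $e$ a weakly fair execution $e'$ on a population of size $n+1\le P$ whose count equals $n$ in infinitely many configurations. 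Since $Count$ is correct for every size up to $P$, such an $e'$ would have to stabilise BST on the true count $n+1$, contradicting count $=n$ infinitely often.

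To build $e'$ I would introduce one extra mobile agent $w$ and invoke semi-uniformity (Definition~\ref{def:semi-uniform} and the subsequent remark): starting $w$ in state $s$ (legitimate, as mobile agents are non-initialised) and keeping it idle, the prefix of $e$ up to $C_{t_1}$ is a valid execution prefix for $n+1$ agents, reaching a configuration in which the three agents $u,v,w$ are simultaneously in state $s$. From there I would run a \emph{rotating-spare} simulation: cut $e$ into the segments between consecutive boundaries $t_r$ and $t_{r+1}$, and replay each segment on the $n+1$ agents by letting two of the three identical agents $u,v,w$ play the roles that $u$ and $v$ play in $e$, while the third one stays idle. Because the protocol is symmetric and deterministic and the three agents are in the same state $s$ at every boundary, any such relabelling reproduces exactly the configuration trajectory of $e$ on the remaining agents and on BST; in particular BST's trace is unchanged, the two role-players return to state $s$ at the next boundary (where $u,v\in s$ in $e$), and the idle agent is still in $s$. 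Thus the three agents are again simultaneously in $s$, the invariant is restored, and BST's state — hence its count $n$ — matches $C_{t_{r+1}}$.

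Running these segments forever yields count $=n$ at every boundary, i.e. infinitely often, which is the desired contradiction, provided $e'$ is genuinely an execution on $n+1$ agents and is weakly fair. Validity is immediate, since each step is a real pairwise interaction and the initial configuration is legal for size $n+1$. I expect the main obstacle to be establishing weak fairness of $e'$ for \emph{every} pair. The pairs lying entirely among the ordinary $n-2$ agents and BST are covered because the concatenation of segments replays all of $e$ from $t_1$ on, and $e$ is weakly fair; but the pairs among $\{u,v,w\}$, and the pairs of each of $u,v,w$ with BST and with the ordinary agents, have no literal analogue in $e$ and must be produced by the rotation itself. The delicate point is therefore to schedule the rotation — which agent is idle, and which of the two roles each active agent is assigned — so that every role-assignment recurs in infinitely many segments; combined with the weak fairness of $e$ (every pair of the original $n$ agents, and each with BST, interacts infinitely often and hence in infinitely many distinct finite segments), this should force every pair of the $n+1$ agents to interact infinitely often. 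Making this bookkeeping of ``which pair is realised in which segment'' precise, while keeping each idle agent genuinely idle so that the segment still reproduces $e$ exactly, is the crux of the argument.
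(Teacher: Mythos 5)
Your proposal follows essentially the same route as the paper's proof: pigeonhole two agents $u,v$ sharing a state $s$ at infinitely many boundary configurations, insert a third agent in state $s$, and rotate which of the three identical agents idles in each segment so that the simulated execution on $n+1$ agents is indistinguishable from $e$ and BST miscounts. The ``crux'' you flag (weak fairness of the rotated execution) is handled in the paper exactly as you anticipate: the boundary configurations are chosen, w.l.o.g., so that every pair of agents interacts within each segment, after which a period-three rotation of the idle agent immediately gives every pair infinitely many interactions.
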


\begin{proof} Let us assume, by contradiction, that in $e$, there are infinitely many configurations with two agents in the same state. Since the state space is finite and the number of agents too, two specific agents $x_2$ and $x_3$ from $\Agents$ are necessarily simultaneously in some state $s\in Q$ in infinitely many configurations. Let $C_{j_1}, C_{j_2}, C_{j_3},\ldots$ be these configurations such that $e = e_1, C_{j_1}, e_2, C_{j_2}, e_3, C_{j_3},\ldots$ W.l.o.g., we choose these configurations such that, in every execution segment $e_i$, every agent in $\Agents$ interacts with every other (this is possible with weak fairness).

Now consider a population $\Agents'=\Agents\cup \{x_1\}$ of size $n+1$. To prove the proposition, we will construct a weakly fair execution $e'$ of $Count$ in population $\Agents'$ where no agent can distinguish $e'$ from $e$, and where consequently $Count$ wrongly counts only $n$ agents instead of the existing $n+1$.

We construct $e'$ based on $e$. 
First, we assume that in $e'$, $x_1$ is in state $s$ in the starting configuration, and  $e'=e'_1, C'_{j_1}, e'_2, C'_{j_2}, e'_3, C'_{j_3},\ldots$ such that each segment $e'_i$ follows the same transition sequence as in $e_i$, but where the agents $x_2$ or $x_3$ participating in the corresponding interactions can be replaced by $x_1$ in the appropriate state, as we explain below. We ensure also that in every $C'_{j_i}$, each of the three agents $x_1, x_2, x_3$ is in state $s$.

More precisely, in segment $e'_{3r+1},C'_{j_{3r+1}}$ ($r \ge 0$), agent $x_1$ does not interact with the rest of the agents, and all the others interact exactly as in $e_{3r+1},C_{j_{3r+1}}$ (each agent $x_i$ is in state $s$ in $C'_{j_{3r+1}}$).  In $e'_{3r+2},C'_{j_{3r+2}}$, agent $x_2$ does not interact with the rest of the agents, and $x_1$ replaces $x_2$ in all the interactions where $x_2$ interacts in $e_{3r+2},C_{j_{3r+2}}$, and all the others interact as in $e_{3r+2}$ (but with $x_1$ instead of $x_2$ in the corresponding interactions).  Each agent $x_i$ is in state $s$ in $C'_{j_{3r+2}}$. In $e'_{3r+3},C'_{j_{3r+3}}$, agent $x_3$ does not interact with the rest of the agents, and $x_1$ replaces $x_3$ in all the interactions where $x_3$ interacts in $e_{3r+3},C_{j_{3r+3}}$, and all the others interact as in $e_{3r+3}$ (but with $x_1$ instead of $x_3$ in the corresponding interactions).  Each agent $x_i$ is in state $s$ in $C'_{j_{3r+3}}$.

We emphasize again that $e'$ is possible, because in every $C'_{j_i}$, each of the three agents $x_1, x_2, x_3$ is in state $s$, so any of them can replace any other in the transition sequence of the next segment $e_{i+1}$. Moreover, $e'$ is weakly fair, because each agent $x_i$ interacts with all the other agents in the appropriate $e'_i$ segments (and by the assumption on $e_i$), and other agents too, due to the weak fairness of $e$. Finally, in $e'$, every agent from $\Agents$ (including BST), executes exactly the same sequence of transition rules as it does in $e$, so no agent can distinguish the fact that the population is actually $\Agents'$ with $n+1$ agents, and $Count$ counts only $n$ agents as it does in $e$. This is a contradiction to the assumption that $Count$ is a correct counting protocol.
\end{proof}

The proof of Lemma \ref{lem-count-weak-property-sink1} uses similar techniques as the proof of Proposition \ref{prop-count-weak-property-naming} and appears in the appendix.

\begin{lemma}\label{lem-count-weak-property-sink1} Let $Count$ be a symmetric (silent or not) counting protocol correct under weak fairness (for any $\n\le P$). Consider any weakly fair execution $e = C_1, C_2, C_3, \ldots, C_j, \ldots$ of $Count$ on a population $\Agents$ of size $n< P$. There is an integer $k$ such that, for any $j\ge k$, no mobile agent is in a state $m\in Q$ such that there is a possible sequence of transitions of $Count$ $(m,m)\overset{*}{\rightarrow}(m,m)$.\end{lemma}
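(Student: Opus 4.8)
The plan is to argue by contradiction, reusing the shadowing technique of Proposition~\ref{prop-count-weak-property-naming} but exploiting the special structure of $m$ to compensate for having only a single ``host'' agent available. Suppose that, in $e$, infinitely many configurations contain a mobile agent sitting in some state $m$ with $(m,m)\overset{*}{\rightarrow}(m,m)$. Since $Q$ and $\Agents$ are finite, I can fix one mobile agent $x_2$ and one such state $m$ for which $x_2$ is in state $m$ in infinitely many configurations. I then select configurations $C_{j_1}, C_{j_2}, \ldots$ in which $x_2$ is in state $m$, chosen so that between any two consecutive ones every pair of agents of $\Agents$ interacts (possible by weak fairness), and write $e = e_1, C_{j_1}, e_2, C_{j_2}, \ldots$.

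On the enlarged population $\Agents' = \Agents \cup \{x_1\}$ (of size $n+1 \le P$, a legitimate instance of $Count$ by the remark following Definition~\ref{def:semi-uniform}), I build a weakly fair execution $e'$ that BST cannot distinguish from $e$, with $x_1$ starting in state $m$. The execution uses a two-phase rotation of $x_1$ over $x_2$: in odd segments $x_1$ stays idle and all agents of $\Agents$ act exactly as in the corresponding $e_i$; in even segments $x_2$ stays idle and $x_1$ replaces $x_2$ in every interaction $x_2$ performs in $e_i$. This substitution is state-consistent, since both $x_1$ and $x_2$ are in $m$ at each boundary, and $x_1$ ends each even segment in $m$ just as $x_2$ does in $e$. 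It guarantees that $x_1$ meets every agent of $\Agents \setminus \{x_2\}$ infinitely often and that each agent of $\Agents$ still meets every other infinitely often.

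The one pair not yet covered is $(x_1, x_2)$: in the rotation they are never simultaneously active, so the shadowing alone would starve this pair. This is exactly where the hypothesis is used. At each boundary configuration both $x_1$ and $x_2$ are in state $m$, so I splice in the transition sequence $(m,m)\overset{*}{\rightarrow}(m,m)$ applied to the pair $(x_1,x_2)$ alone. This makes them interact (at least once per boundary, hence infinitely often) while returning both to $m$ and leaving every other agent untouched, so the boundary configuration is restored and the segment structure is preserved. Symmetry of $Count$ ensures that the two agents stay in identical states throughout this spliced sequence. With these insertions, $e'$ is weakly fair.

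Finally I would verify that BST's trace is identical in $e$ and $e'$: BST never participates in the spliced $(m,m)\overset{*}{\rightarrow}(m,m)$ sequences, and in every segment it interacts with the same agents in the same states (the agent playing $x_2$'s role is always in $x_2$'s $e$-state, whether it is $x_2$ itself or $x_1$). Hence BST undergoes the same transitions and converges to the same count $c=n$, although $|\Agents'| = n+1$, contradicting the correctness of $Count$. The main obstacle is precisely the $(x_1,x_2)$ fairness gap left by single-host shadowing, resolved by the sink-cycle splicing that the hypothesis provides; the remaining work --- checking state-consistency of each substitution at the boundaries and the exact equality of BST's trace --- is routine bookkeeping analogous to the proof of Proposition~\ref{prop-count-weak-property-naming}.
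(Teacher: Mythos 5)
Your proposal is correct and follows essentially the same route as the paper's own proof: a contradiction argument that adds one extra agent starting in $m$, alternates which of the two $m$-agents is active in successive segments, and splices the cycle $(m,m)\overset{*}{\rightarrow}(m,m)$ at the segment boundaries precisely to close the weak-fairness gap for the pair itself while keeping every other agent's (and BST's) transition sequence identical to that in $e$. The only cosmetic difference is your appeal to the semi-uniformity remark to justify running $Count$ on $n+1$ agents, which is unnecessary since $n<P$ already makes $n+1\le P$ a population covered by the correctness hypothesis.
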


The following two propositions follow from Proposition \ref{prop-count-weak-property-naming} and Lemma \ref{lem-count-weak-property-sink1}. A proof of Proposition \ref{prop-count-weak-Ps} uses similar techniques as the proof of Proposition \ref{prop-count-weak-property-sink2} and appears in the appendix.

\begin{proposition}\label{prop-count-weak-Ps}
Any symmetric counting protocol $Count$ correct for any $\n\le P$ (undistinguishable and non-initialized) mobile agents, under weak fairness, has to use at least $P$ states per mobile agent.
\end{proposition}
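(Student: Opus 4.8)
The plan is to apply the two preceding results to a population of the largest sub-maximal size and then count distinct states. First I would instantiate Proposition~\ref{prop-count-weak-property-naming} and Lemma~\ref{lem-count-weak-property-sink1} on a population $\Agents$ of size $n = P-1 < P$, fixing an arbitrary weakly fair execution $e = C_1, C_2, \ldots$. Proposition~\ref{prop-count-weak-property-naming} yields an index $k_1$ beyond which all $P-1$ mobile agents occupy pairwise distinct states, and Lemma~\ref{lem-count-weak-property-sink1} yields an index $k_2$ beyond which no mobile agent occupies a ``sink'' state, i.e. a state $m$ admitting $(m,m)\overset{*}{\rightarrow}(m,m)$. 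Evaluating at any configuration $C_j$ with $j \ge \max(k_1,k_2)$, the $P-1$ agents then sit in $P-1$ distinct \emph{non-sink} states of $Q$.

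Next I would exhibit one more state, a sink state, to push the count from $P-1$ up to $P$. Since the protocol is symmetric and deterministic, an interaction between two agents sharing a state $s$ has the form $(s,s)\rightarrow(f(s),f(s))$ for a well-defined map $f\colon Q\to Q$. Because $Q$ is finite, iterating $f$ from any state must eventually revisit a state, so the functional graph of $f$ contains a cycle; any state $m$ lying on such a cycle of length $\ell$ satisfies $f^{\ell}(m)=m$ and hence $(m,m)\overset{*}{\rightarrow}(m,m)$, making it a sink state belonging to $Q$. By Lemma~\ref{lem-count-weak-property-sink1}, this $m$ is absent from $C_j$ and therefore differs from each of the $P-1$ names present there. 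Thus $C_j$ exhibits $P-1$ distinct non-sink states together with at least one further sink state, giving $|Q|\ge (P-1)+1 = P$, which is exactly the claimed bound on the number of states per mobile agent.

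The main obstacle I anticipate is not the counting of names, which is immediate from Proposition~\ref{prop-count-weak-property-naming} but only delivers $P-1$; the whole strength of the bound rests on producing one additional state that provably cannot coincide with any name. The cycle argument on $f$ certifies that a sink state exists in $Q$, while Lemma~\ref{lem-count-weak-property-sink1} certifies that sink states are excluded from the names, and the delicate point is to synchronize these two ``eventually'' guarantees by reading everything off a single sufficiently late configuration $C_j$ with $j\ge\max(k_1,k_2)$. A secondary care is that the proposition counts only states of mobile agents (elements of $Q$), so BST's states are irrelevant and I must check that both the $P-1$ names and the sink state $m$ genuinely lie in $Q$, which they do since $f$ and the transitions in question are between mobile agents.
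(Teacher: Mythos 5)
Your proposal is correct and follows essentially the same route as the paper's own proof: both use the symmetry-plus-determinism cycle argument on the map $s\mapsto f(s)$ to exhibit a sink state $m$ with $(m,m)\overset{*}{\rightarrow}(m,m)$, invoke Lemma~\ref{lem-count-weak-property-sink1} to exclude it from the eventual configurations of an execution with $n=P-1$, and invoke Proposition~\ref{prop-count-weak-property-naming} to obtain $P-1$ further distinct states, yielding $|Q|\ge P$. The only cosmetic difference is that you read the count off a single sufficiently late configuration $C_j$ with $j\ge\max(k_1,k_2)$, whereas the paper phrases the same disjointness in terms of states appearing versus not appearing infinitely often.
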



\begin{proposition}\label{prop-count-weak-property-sink2}
Consider any symmetric (silent or not) counting protocol $Count$ correct under weak fairness (for any $\n\le P$), and using at most $P$ states per mobile agent. For every state $s\in Q$, there is a transition sequence $(s, s)\overset{*}{\rightarrow}(m,m)$, s.t. $m$ is unique and does not appear infinitely often in executions with $\n< P$. Moreover, $(m, m){\rightarrow}(m,m)$.
\end{proposition}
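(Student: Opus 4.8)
The plan is to combine a finiteness/cycling argument on self-interactions with a pigeonhole count supplied by Proposition~\ref{prop-count-weak-property-naming} and Lemma~\ref{lem-count-weak-property-sink1}. Call a state $m\in Q$ a \emph{sink} if there is a transition sequence $(m,m)\overset{*}{\rightarrow}(m,m)$. First I would establish that from every state a sink is reachable by self-interaction. Fix $s\in Q$. Since the protocol is symmetric, whenever two agents in the same state $r$ interact they move to a common state $r'$, i.e.\ $(r,r)\rightarrow(r',r')$; hence repeatedly interacting two agents started in $s$ yields a deterministic sequence $s=r_0,r_1,r_2,\ldots$ with $(r_k,r_k)\rightarrow(r_{k+1},r_{k+1})$. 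As $Q$ is finite, this sequence eventually enters a cycle $c_0\to c_1\to\cdots\to c_{L-1}\to c_0$; going once around the cycle shows $(c_i,c_i)\overset{*}{\rightarrow}(c_i,c_i)$, so each $c_i$ is a sink and $(s,s)\overset{*}{\rightarrow}(c_0,c_0)$. In particular a sink exists and is reachable from any state.

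Next I would show there is \emph{at most one} sink, by a counting argument on the largest admissible population. Consider any weakly fair execution on a population of size $n=P-1<P$. By Proposition~\ref{prop-count-weak-property-naming} there is a point after which all $P-1$ mobile agents are in pairwise distinct states, and by Lemma~\ref{lem-count-weak-property-sink1} there is a point after which no mobile agent occupies a sink state. Past the later of these two points, the $P-1$ agents occupy $P-1$ distinct \emph{non-sink} states; since $|Q|\le P$, at most one state of $Q$ can be a sink. Combined with the existence established above, there is a \emph{unique} sink, which we name $m$; and by Lemma~\ref{lem-count-weak-property-sink1} applied to $m$, this state does not appear infinitely often in any execution with $n<P$.

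Finally I would combine the two parts. Every state $s$ reaches, by self-interaction, a cycle all of whose states are sinks (first paragraph); since the sink is unique, that entire cycle must equal $\{m\}$, so $L=1$ and the self-loop is a single transition $(m,m)\rightarrow(m,m)$, while $(s,s)\overset{*}{\rightarrow}(m,m)$ for every $s$. This yields all four claims of the proposition. The step I expect to require the most care is the pigeonhole count: it relies on instantiating the ambient bound at exactly $n=P-1$, so that the $P-1$ forced-distinct, forced-non-sink agents consume all but one of the at most $P$ available states, and on verifying that the eventual guarantees of Proposition~\ref{prop-count-weak-property-naming} and Lemma~\ref{lem-count-weak-property-sink1} indeed hold simultaneously from some common index onward.
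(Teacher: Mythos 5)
Your proof is correct and follows essentially the same route as the paper's: the deterministic self-interaction sequence forced by symmetry must cycle, Lemma~\ref{lem-count-weak-property-sink1} forbids cycle states from persisting, and the pigeonhole count at $n=P-1$ via Proposition~\ref{prop-count-weak-property-naming} leaves room for at most one such state. The only (cosmetic) difference is the last step: you get $(m,m)\rightarrow(m,m)$ directly from uniqueness of the sink forcing the cycle to have length one, whereas the paper argues it by contradiction, re-running the counting argument on a hypothetical successor state $s\neq m$.
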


\begin{proof}
As $Count$ is symmetric, any two agents, both in some state $s\in Q$, in an interaction, have to execute a symmetric transition of the form $(s,s)\rightarrow(s_1,s_1)$. Thus there is a possible sequence of transitions $(s,s)\rightarrow(s_1,s_1)\rightarrow(s_2,s_2)\rightarrow(s_3,s_3)\ldots$ As mobile agents are finite state, for some $j> i\ge 1$, $s_{i}=s_{j}$, i.e. $(s_i, s_i)\overset{*}{\rightarrow}(s_i,s_i)$.  By Lem. \ref{lem-count-weak-property-sink1}, $s_i = m$ s.t. $m$ does not appear infinitely often in executions with $\n< P$. As there are at least $P-1$ states appearing infinitely often in an execution with $\n= P-1$ (by Proposition \ref{prop-count-weak-property-naming}), there is at most one such possible state $m$ in a $P$ state protocol. Thus, the first part of the lemma holds.

Finally, by contradiction, if $(m, m){\rightarrow}(s,s)$ s.t. $s\neq m$, then the previous part of the proof 
implies $(s, s)\overset{*}{\rightarrow}(s,s)$. 
When $\n= P-1$, and as $Count$ uses only $P$ states, and $m$ never appears infinitely often in an execution, $s$ does appear infinitely often in configurations of an execution (by Proposition \ref{prop-count-weak-property-naming}). This is a contradiction to Lem. \ref{lem-count-weak-property-sink1}. Thus, $(m, m){\rightarrow}(m,m)$.
\end{proof}

To prove the lower bound (Theorem \ref{th-count-weak-tcomplexity}), in addition to the results above, we use the following definitions related to the considered counting protocols. 

\begin{definition}\label{def:weak-fairness-bound}~
\begin{itemize}
\item We call \emph{homonyms}, or \emph{homonymous agents}, mobile agents in the population having the same state, but different from $m$.
\item We say that two (or more) homonyms (in state $s$) are \emph{reduced (to $m$)} whenever a sequence of transitions $(s, s)\overset{*}{\rightarrow}(m,m)$ is applied to them.
\item We say that a mobile agent is \emph{named} if it has a state different from $m$ (a \emph{name}). A group of agents is \emph{named} if each of them is named with a distinct name. Similarly, a configuration of agents is named, if all the agents in this configuration are named.
\item A \emph{reduced (from homonyms) configuration} is a configuration without any homonym. 
    Given a configuration $C$, let $R(C)$ be the set of names of mobile agents appearing an odd number of times in $C$, i.e., the set of names appearing in the corresponding reduced configuration.
\item For any 
two sets $E, E'\subseteq \{1,\ldots,n\}$, we denote by $E\triangle E'\equiv E\cup E'- E\cap E'$ their symmetric difference. In particular, $E\triangle\{e\}$ ($e\in\{1,\ldots,n\}$) is $E\cup\{e\}$ if $e\notin E$, and $E-\{e\}$ if $e\in E$.

\item A \emph{stationary point (or state) of BST} is a state $s_{BST}$ of BST such that $(s_{BST}, m)\rightarrow(s'_{BST}, m)$ and $s'_{BST}$ is also stationary. (Note that this transition sequence can be broken, i.e., BST can change its state to a non-stationary one, after an interaction with an agent in a state $s\neq m$.)
\item The \emph{naming sequence} 
 is a sequence $U=(s_{BST}^i, s_i)_{i\geq 1}$ of pairs $(s_{BST}^i, s_i)$, where $s_{BST}^i$ is a BST state and $s_i$ is a mobile agent state. $U$ is defined inductively as follows: $s_{BST}^1$ is the initial state of BST, and for any $i\geq 1$, $(s_{BST}^{i}, m)\rightarrow (s_{BST}^{i+1}, s_i)$ is a transition rule of the protocol.
Let $U_j$ be a prefix of $U$ s.t. $U_j = (s_{BST}^i, s_i)_{1\leq i\leq j}$.
\end{itemize}
\end{definition}

To obtain the result of Theorem \ref{th-count-weak-tcomplexity}, we focus on the set of the longest execution prefixes where BST meets and names agents in state $m$ (according to the fixed naming sequence, defined in Def. \ref{def:weak-fairness-bound}). In such prefixes, we study the possibility of the occurrence of a stationary point (Def. \ref{def:weak-fairness-bound}). We show that, for a semi-uniform counting protocol (Definition \ref{def:semi-uniform}, Sec. \ref{sec:model}), for any $n<P$, such a point does not exist (Lemma \ref{lem:count-weak-nonstat}), before BST has named all the $n$ agents. That is, in the case of the considered executions, BST will continue giving names to $m$-state agents that it meets (it won't be ``blocked'' waiting for some named agent, for possibly deciding to change its strategy accordingly).

So, using Lemma \ref{lem:count-weak-nonstat}, we prove that there exists an execution prefix in which BST continuously meets and names agents in state $m$ without entering a stationary state. Then, 
we show that the longest such prefix corresponds to the naming sequence of length $\Omega(2^{n})$. This is by observing that the number of starting unnamed (reduced) configurations is $2^{n}-1$, for $n=P-1$, and that after each step (transition) of the naming sequence, BST can accomplish naming of only one such starting configuration. 
To prove the result for any $n$ and $P$, we use Definition \ref{def:semi-uniform} of semi-uniform protocols. Intuitively, for such a protocol, when only a subset of a population of $x<n$ agents interacts with BST, BST should behave like $P=x$ is possible, even if it is larger. 

%
%
In the following lemma, we show that the naming sequence does not contain any stationary state.


\begin{lemma}\label{lem:count-weak-nonstat}
Let $Count$ be a symmetric (silent or not) semi-uniform
counting protocol correct under weak fairness (for any $\n\le P$) and using $P$ states per mobile agent. 
The naming sequence $U$ (Def. \ref{def:weak-fairness-bound}) of $Count$ 
does not contain any stationary state.
\end{lemma}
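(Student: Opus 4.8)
The plan is to argue by contradiction: suppose some state $s_{BST}^k$ occurring in the naming sequence $U$ is stationary, and take the least such $k$. I would begin by extracting two purely structural facts from this hypothesis. First, the transition that brings BST into $s_{BST}^k$ must genuinely \emph{name} an agent, i.e. $s_{k-1}\neq m$: otherwise $(s_{BST}^{k-1},m)\rightarrow(s_{BST}^k,m)$ with $s_{BST}^k$ stationary would make $s_{BST}^{k-1}$ stationary as well, contradicting minimality of $k$. Second, the definition of a stationary state (a greatest-fixpoint condition) gives that once BST sits in $s_{BST}^k$ and meets only agents in state $m$, it remains forever among stationary states and leaves an $m$-agent unchanged; equivalently $s_i=m$ for every $i\ge k$, and BST can abandon a stationary state \emph{only} upon meeting a named agent (a state $\neq m$).

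The heart of the proof is then to turn the existence of $s_{BST}^k$ into a weakly fair execution, on a population of some size $n<P$, that keeps an agent in state $m$ forever. Using semi-uniformity (Definition \ref{def:semi-uniform} and the Remark following it), BST's reactions to the agents it meets do not depend on the total population size or on $P$, so I can let BST follow $U$ and meet $m$-agents until it reaches $s_{BST}^k$, recycling a bounded number of physical agents: whenever the naming steps turn agents into named states, I reduce homonyms back to $m$ through the sequences $(s,s)\overset{*}{\rightarrow}(m,m)$ granted by Proposition \ref{prop-count-weak-property-sink2}, regenerating the $m$-agents that BST needs. The aim is to reach a configuration in which BST sits in $s_{BST}^k$ while \emph{every} mobile agent is again in state $m$, all reductions having been performed among the mobile agents without BST ever meeting a named agent (so that, by the second structural fact, BST never leaves stationarity).

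From such a configuration I would extend to a full execution in which BST interacts only with $m$-agents and mobile agents interact only with one another or with BST. Since $(m,m)\rightarrow(m,m)$ and stationary states preserve state $m$, every agent stays in state $m$ forever while every pair still meets infinitely often, so the execution is weakly fair. But then, in a weakly fair execution of $Count$ with $n<P$, a mobile agent is in the sink state $m$ (which satisfies $(m,m)\overset{*}{\rightarrow}(m,m)$) in infinitely many configurations, directly contradicting Lemma \ref{lem-count-weak-property-sink1}. This contradiction forces the conclusion that $U$ contains no stationary state.

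The main obstacle is the middle step --- simultaneously driving BST into the stationary state and emptying the population back to all-$m$ --- and it has two delicate components. To reduce the named agents I need \emph{homonyms}, i.e. two physical agents bearing the same name; since the names $s_i$ are drawn from the finite set $Q$ while BST may perform arbitrarily many naming steps before $k$, I expect repeated names to be forced by a pigeonhole argument, allowing a constant-size pool of agents to be cycled through ``name-then-reduce'' rounds while respecting $n<P$. Keeping this construction weakly fair while never letting BST meet a named agent during the stationary phase (which could wake it up) is exactly the kind of careful interleaving used in the proof of Proposition \ref{prop-count-weak-property-naming}, and I would adapt that scheduling/impersonation technique here; making the bookkeeping of ``which physical agent BST meets'' precise is the part I expect to require the most work.
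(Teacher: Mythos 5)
Your overall strategy --- drive BST into the alleged stationary state $s_{BST}^k$ while returning every mobile agent to the sink state $m$, then freeze the execution there --- is exactly the skeleton of the paper's proof, and your closing contradiction (a weakly fair execution with $\n<P$ in which state $m$ occurs infinitely often, against Lemma \ref{lem-count-weak-property-sink1}) is a legitimate, if slightly different, way to finish; the paper instead reaches the all-$m$ configuration for populations of every size $2k+j<P$ and concludes by indistinguishability of the counts. The genuine gap is in the middle step, precisely where you flag uncertainty: your mechanism for producing the homonyms needed to reduce the named agents back to $m$ does not work. The names $s_1,\dots,s_{k-1}$ issued before step $k$ are fixed by the deterministic naming sequence, and nothing forces any two of them to coincide: $k$ may be as small as $2$, and even for larger $k$ the $s_i$ can all be distinct as long as $k\le P$, so no pigeonhole repetition is available; moreover, mobile-to-mobile interactions give you no controllable way to manufacture a second copy of a given name (the only transitions you are guaranteed are the symmetric reductions $(s,s)\overset{*}{\rightarrow}(m,m)$ of Proposition \ref{prop-count-weak-property-sink2}, which presuppose the homonym you are trying to create). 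Consequently a ``constant-size pool'' of recycled agents cannot absorb the named states, and BST, once stationary, will never issue further names to help you.

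The paper's fix uses the one hypothesis you did not exploit: the mobile agents are \emph{non-initialized}, so the starting configuration may be chosen adversarially. One takes $2k+j$ agents of which $k$ start already in states $s_1,\dots,s_k$ and $k+j$ start in $m$; after BST performs the $k$ naming steps on $m$-agents, each freshly named agent has a pre-placed homonym, and Proposition \ref{prop-count-weak-property-sink2} reduces every pair to $m$ without BST ever meeting a named agent. This choice forces the population size to be about $2k$, which is why the paper then needs the semi-uniformity argument (Definition \ref{def:semi-uniform}) to justify assuming $P>2k$ and to transfer the conclusion to every prefix $U_k$ --- a point your sketch gestures at but would also have to make precise. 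Without the pre-placed homonyms your construction cannot reach the all-$m$ configuration, so as written the proof does not go through.
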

\begin{proof}

Assume by contradiction that the first stationary state in $U=(s_{BST}^i, s_i)_{i\geq 1}$ is in the $k^{th}$ step (element), i.e., the state $s_{BST}^k$ is stationary, for some $k>0$. 
In the following, we assume that $P> 2k$ (we justify this assumption later) and for any $j\geq0$, we construct an execution $e_j$ of $Count$ for a population of $2k+j$ $(< P)$ agents. In $e_j$, $k+j$ agents are initially in state $m$, while the $k$ other agents are in states $s_1$, $s_2$, \ldots, $s_k$ (these are the states $s_i$ appearing in $U$). 
Each $e_j$ is composed of two phases.

In the first phase, let 
$k$ $m$-state agents interact one by one with BST, and at the end of this phase (by the definition of $U$) the states of these agents are $s_1$, $s_2$, \ldots, $s_k$. Notice that now among these $k$ agents (call it the first sub-population), those that have names have homonyms in the second sub-population (of other $k+j$ agents). Notice also that, by Definition \ref{def:semi-uniform} of semi-uniform protocols, this is also a prefix of an execution projected on a population of only $k$ agents. By the same property, this can be also a prefix of execution for a larger population and for an upper bound $P$ as large as we want. Thus, $P>2k$ is a valid assumption.

In the second phase, let every named (let us say, by $s$) agent of the first sub-population interact with its homonym in the second sub-population, s.t. the sequence of transitions $(s, s)\overset{*}{\rightarrow}(m, m)$ takes place for each such pair of homonyms (possible by Proposition \ref{prop-count-weak-property-sink2}). At the end of this second phase, all agents are thus in state $m$ (at most $2k$ were homonyms at the end of the first phase, and $j$ were initially in state $m$ and never interacted). 

Now, no matter how the agents continue to interact, as all mobile agents are in state $m$, and BST is in a stationary state $s_{BS}^{k}$, all mobile agents remain forever in state $m$  (Proposition \ref{prop-count-weak-property-sink2}) and BST remains in a stationary state (by definition of such a state). Thus, after the second phase, we make all pairs of agents interact infinitely often to obtain a weakly fair execution for any $j$. 
In all these constructed executions, no agent can distinguish between the executions and the corresponding population sizes ($2k+j< P$ agents for any $j\geq0$), and thus a correct counting cannot be obtained.

By Definition \ref{def:semi-uniform} of a semi-uniform protocol, 
the projection of the first phase (of any $e_j$) on the first sub-population (a group of $k$ agents interacting with BST in this phase), is also a prefix of an execution of $Count$ for a bound $P'\ge k$ ($k\le P'\le P$). Thus, for any such $k$ and $P'$, there is no stationary point in $U_k$.
In other words, the lemma holds for any value of $k$ and $P$, and thus also for any prefix of $U$. 
\end{proof}

\begin{theorem}\label{th-count-weak-tcomplexity} Let $Count$ be a symmetric (silent or not) semi-uniform counting protocol correct under weak fairness (for any $\n\le P$) and using $P$ states per mobile agent.
The convergence time of $Count$ is at least $2^{\n} -1$ non-null transitions.
\end{theorem}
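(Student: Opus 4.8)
The plan is to exhibit a single weakly fair execution on a population of $n<P$ mobile agents in which at least $2^n-1$ non-null transitions occur before $Count$ can converge; the case $n=P$ then follows from the semi-uniformity of $Count$ (Definition~\ref{def:semi-uniform}), by carrying out the construction on $P-1$ agents and releasing the remaining one only after convergence has already been forced. The two engines of the construction are the forced naming sequence $U$ and homonym reduction. By Lemma~\ref{lem:count-weak-nonstat}, $U$ contains no stationary state, so every time the adversary lets an agent in the sink state $m$ interact with BST, BST is obliged to perform a (non-null) naming transition and to advance one step along $U$. By Proposition~\ref{prop-count-weak-property-sink2}, any two homonyms in a state $s\neq m$ can be driven back to the sink via $(s,s)\overset{*}{\rightarrow}(m,m)$; the adversary uses this to recycle already-named agents into fresh $m$-agents, so that with only $n$ agents it can keep feeding BST new $m$-agents and keep it moving along $U$ for as long as it wishes.

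First I would set up the bookkeeping. Following Definition~\ref{def:weak-fairness-bound}, I track each reachable reduced configuration by the set $R(C)\subseteq\{1,\dots,n\}$ of names occurring an odd number of times, noting that a reduction $(s,s)\overset{*}{\rightarrow}(m,m)$ removes exactly two copies of a name and hence leaves $R$ unchanged, whereas a single naming step performed by BST toggles the membership of one name, i.e. replaces $R$ by $R\triangle\{e\}$ for the name $e$ dictated by the current position in $U$. Thus the whole execution induces a walk on the Boolean lattice of subsets of $\{1,\dots,n\}$ whose elementary moves are single symmetric differences, and each such move costs at least one non-null transition of BST.

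The combinatorial core is to show that, in order to converge, BST must realize this walk over all $2^n-1$ nonempty subsets, one per naming step. I would argue this by the indistinguishability technique already used for Proposition~\ref{prop-count-weak-property-naming}: for each nonempty $S\subseteq\{1,\dots,n\}$ there is a reduced (``starting'') configuration in which exactly the agents indexed by $S$ are named, and the anonymity of the agents together with the fact that BST's behaviour is completely determined by its own state (its position in $U$) means that a single step of $U$ can account for only one such configuration. Hence the adversary can keep the prefix consistent with the existence of still-unaccounted agents, so that, by Proposition~\ref{prop-count-weak-property-naming} together with semi-uniform projection, the current trace is also a legal trace of a strictly smaller population and therefore $c=n$ cannot yet have stabilized, until all $2^n-1$ subsets have been traversed. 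Since each traversal step contributes at least one non-null transition, the execution performs at least $2^n-1$ of them before convergence. To finish, I would verify that the infinite execution so constructed is genuinely weakly fair: this is arranged exactly as in Proposition~\ref{prop-count-weak-property-naming}, by interleaving the forced naming and reduction phases with rounds in which every remaining pair of agents is made to interact, and, once the $2^n-1$ transitions are exhausted, by letting the system run freely so that $Count$, being correct, finally converges.

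I expect the main obstacle to be this counting step: proving that BST genuinely cannot shortcut the traversal, i.e. that no single step of the naming sequence can simultaneously dispose of two distinct starting configurations, and that a reduction is always available to supply the homonym BST needs to continue. The delicate point is that agents are anonymous and BST has no memory beyond its state, so one must show that the only information BST can use to tell two subsets apart is accumulated, one name at a time, through the fixed sequence $U$; combined with the monotonicity of the estimate $c$ (a non-decreasing lower bound that must reach exactly $n$), this is what simultaneously prevents early convergence and pins the length of the forced prefix at $2^n-1$.
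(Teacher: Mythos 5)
Your proposal follows essentially the same route as the paper's proof: force BST along the stationary-point-free naming sequence $U$ (Lemma~\ref{lem:count-weak-nonstat}) by recycling homonyms into $m$-agents via Proposition~\ref{prop-count-weak-property-sink2}, track reduced configurations through $R(C)$ and observe that each naming step is a single symmetric difference, so that (since $A\triangle B=A'\triangle B$ iff $A=A'$) each prefix length of $U$ can complete the naming of only one of the $2^{n}-1$ unnamed starting configurations, and finally extend to general $n$ and $P$ by semi-uniformity. The "delicate point" you flag is resolved exactly as you sketch it, so the argument is sound and matches the paper's.
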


\begin{proof}
We will build an execution of $Count$ where the length of the corresponding naming sequence $U$ before convergence (and thus the convergence time of $Count$)
is at least $2^{n}-1$.

Consider a population of $n=P-1$ agents. 
Consider a possible execution prefix $e$ where BST interacts only with $m$-state agents as long as they are not distinctly named.
If the agents are not distinctly named, by Proposition \ref{prop-count-weak-property-sink2}, there is always either at least one agent in state $m$, or some homonyms that can be reduced to $m$. So, assume that in $e$, whenever the mobile agents are not distinctly named and there is no agent in state $m$, a reduction of some homonyms is done. Then, an agent in state $m$ interacts with BST. By Lemma \ref{lem:count-weak-nonstat}, in every such corresponding transition, the state of BST is not stationary, and thus eventually an $m$-state mobile agent is ``given'' a name by BST. Let us repeat this scenario, until all the $n$ agents are named. This is an execution prefix $e$ (with $n=P-1$) that we consider below.

By Lemma \ref{lem:count-weak-nonstat},  $Count$ has to name $n=P-1$ mobile agents, starting from any unnamed configuration $C_j$.
For this specific case of $n=P-1$, there is \emph{exactly one} possible configuration $C^*$ (ignoring the state of BST and the permuted configurations\footnote{A permuted configuration obtained by permuting the elements in the configuration vector.}) where all $n$ mobile agents are named. Notice that $R(C^*)=Q\setminus m$ ($R()$ is defined in Def. \ref{def:weak-fairness-bound}). 
Consider a prefix $U_j= (s_{BS}^i, s_i)_{0\leq i\leq j}$ of the unique naming sequence $U$. Notice also that, by Def. \ref{def:weak-fairness-bound}, $A\triangle B= A'\triangle B$ iff $A= A'$.  This implies that, given $U_j$, there is a unique  $R(C_j)$ such that $R(C_j) \triangle\{s_1\}\triangle \{s_2\}\triangle \{s_3\} \ldots\triangle \{s_j\} = \{a_1, a_2, \ldots, a_{P-1}\}=R(C^*)=Q\setminus m$. 
As $|\{R(C_j): C_j \text{ is any possible unnamed starting configuration}\}|=2^{P-1} -1$, 
the length of $U$ is at least $2^{P-1} -1=2^{n} -1$. Hence, the length of the longest execution $e$ is at least $2^{n-1} -1$ s.t. $n=P-1$.  
By the Remark following Definition \ref{def:semi-uniform}, $e$ is also an execution prefix for any bound $\ge P$, given the same population size. Thus, the theorem actually holds for any $n$ and any upper bound $P$ on $n$.
%
\end{proof}

\section{Conclusion and Perspectives}

This work is a sequel to \cite{DBLP:conf/wdag/BeauquierBCS15} and it answers the questions concerning time complexity of the symmetric space optimal protocols proposed there. 
What can be learned from the current work is that there exists a big difference, not only in terms of the required space, but also in terms of time complexity, between the case where the interactions between agents are random and the case where they are only weakly fair.  
%

From a more practical point of view, it would be interesting to investigate if 
this difference still exists 
when more memory space is given to the agents.
We already know that, concerning weak fairness, a supplementary bit ($2P$ states) allows to design protocols like in \cite{DBLP:journals/tcs/IzumiKIW14} with a logarithmic round complexity (a round being a shortest fragment of execution where each agent interacts with each other), while another additional bit allows to solve this problem in only constant number of rounds \cite{Beauquier2007}.
Concerning global or probabilistic fairness, there exist less studies about counting protocols and especially about their complexity analysis. For example, it would be certainly interesting to determine which size of memory is needed, for having an expected constant convergence time.
More generally, studying formally the trade-offs between space and time complexities for counting algorithms in population protocols could be a valuable sequel to the present work.

\bibliographystyle{is-plain}
\bibliography{biblio_janna}


\appendix
\section*{Appendix}

\noindent{\textcolor{darkgray}{$\blacktriangleright$}\nobreakspace\sffamily\bfseries  Lemma \ref{lem:tbound-rand1}.}
\emph{If the considered protocol $Count$ has a transition rule that allows to decrease the number of agents in state $i$ through interactions between mobile agents (rules $(i, i)\rightarrow(i, 1-i)$, $(i, i)\rightarrow(1-i, 1-i)$ or $(i, 1-i)\rightarrow(1-i, 1-i)$), then for any trace $T$, we have $x_i(T)\leq 1$.}

\begin{proof}
Consider such a protocol, with a rule $(i, i)\rightarrow(i, 1-i)$ or $(1-i, 1-i)$. Consider a trace $T$. This trace can be obtained by an execution prefix $e=C_1, C_2, \ldots, C_k$, such that in configuration $C_k$, there are $\ell$ agents in state $i$, and $x(T)-\ell$ agents in state $1-i$. Now, we can expand this execution prefix by making agents in state $i$ interact, until they all are in state $1-i$, except one in the case of the interaction $(i, i)\rightarrow (i, 1-i)$. This new execution prefix contains the same interactions with BST, and thus, has the same trace. The last configuration of this execution prefix contains at most one agent in state $i$. So, $x_i(T)\leq 1$.


Now, if $(i, 1-i)\rightarrow (1-i, 1-i)$ is the only rule allowing the number of agents in state $i$ to decrease, to use the same kind of reasoning, one must first show that there can always be an agent in state $1-i$ in the configuration, to allow this rule to be executed. So let us show that a situation with $x_i(T)=x(T)>1$ cannot be reached. Indeed, consider a trace $T$ such that $x(T)=x_i(T)=1$. At least one execution with this trace can lead to a configuration with a single agent in state $i$. In this configuration, transition $(s_{BST}, i)\rightarrow(s'_{BST}, i)$ is always possible, so that either transition $(s_{BST}, 1-i)\rightarrow(s'_{BST}, *)$ or $(s_{BST}, i)\rightarrow(s'_{BST}, 1-i)$ must eventually occur for $x(T)$ to increase. When the transition $(s_{BST}, 1-i)\rightarrow(s'_{BST}, *)$ takes place, all executions with this trace contain at least one agent in state $1-i$ in the configuration before, and this agent can interact with any agent in state $i$, so that $x_i(T')=x_i(T)=0$. If an interaction $(s_{BST}, i)\rightarrow (s'_{BST}, 1-i)$ occurs, then $x_i(T')=0$. In any case, the trace is such that $x_i(T')=0<x(T')$, so that any traces built upon $T'$ corresponds to some execution ending with some agent in state $1-i$ and one or zero agent in state $i$. Thus, any trace $T$ can be expanded with $(i, 1-i)\rightarrow(1-i, 1-i)$ interactions until at most one agent in state $i$ remains, and $x_i(T)\leq 1$.
\end{proof}
\noindent{\textcolor{darkgray}{$\blacktriangleright$}\nobreakspace\sffamily\bfseries  Lemma \ref{lemma-all-flip}.}
\emph{If phase $b$ starts with $n_b \geq n/2$, then it ends with $n_b = 0$
with probability at least $1/2$.}
\begin{proof}
For simplicity we will assume $b=0$; the $b=1$ case is symmetric.  So
we are looking at a zero phase that starts with $n_0 \geq n/2$.  From the
structure of the protocol, we know that at the start of this phase,
$c_1 = 0$, but $c_0$ might be larger.  It happens that the worst
case is when $c_0 = 0$, but we will analyze the process for any
initial value of $c_0$.

In the analysis below we will fix $n_0, n_1$,
to their values at the start of the phase.  To keep track of
what happens, let $i$ be the number of zero values converted to ones
so far during this phase; given the value of $i$, this gives $n_0-i$
zeros and $n_1+i$ ones in the population, and the value of the $c_1$
register will be $i$.  We fail to convert all zeros to ones if we exit
the phase while $i$ is less than $n_0$.

For each particular value of $i$, this occurs only if (a) $c_0$ is
already $0$, and (b) BST observes $6 (n \ln n + 1)$ ones
in a row.  Whether or not $c_0 = 0$, the latter event occurs with probability exactly
\begin{equation}
\left(\frac{n_1+i}{n}\right)^{6 (i \ln i + 1)}
\end{equation}
which by the union bound gives an upper bound on the probability
that we leave the phase for any $i < n_0$ of
\begin{equation}
\sum_{i=0}^{n_0-1} \left(\frac{n_1+i}{n}\right)^{6 (i \ln i + 1)}
\end{equation}

We will bound this sum by considering the terms with $i < n_0/2$ and
$i \geq n_0/2$ separately. 

For $i < n_0/2$, we have
\begin{align*}
    \sum_{i=0}^{\floor{(n_0-1)/2}} \left(\frac{n_1+i}{n}\right)^{6 (i \ln i + 1)}
    &\leq
    \sum_{i=0}^{\floor{(n_0-1)/2}} \left(\frac{n/2+n/4}{n}\right)^{6 (i \ln i + 1)}
    \\&=
    \sum_{i=0}^{\floor{(n_0-1)/2}} \left(3/4\right)^{6 (i \ln i + 1)}
    \\&\leq
    2\cdot(3/4)^6
    + \sum_{i=2}^{\infty} \left(3/4\right)^{6 (i \ln 2 + 1)}
    \\&=
    2\cdot(3/4)^6
    + (3/4)^6 \sum_{i=2}^{\infty} \left(3/4\right)^{6 i \ln 2}
    \\&=
    2\cdot(3/4)^6
    + (3/4)^6 (3/4)^{12 \ln 2} \frac{1}{1 - (3/4)^{6 \ln 2}}
    \\&\approx 0.37926
    \\&\leq 2/5.
\end{align*}

For $i \geq n_0/2$, we have
\begin{align*}
    \sum_{i=\ceil{n_0/2}}^{n_0-1} \left(\frac{n_1+i}{n}\right)^{6 (i \ln i + 1)}
    &\leq
    \sum_{i=\ceil{n_0/2}}^{n_0-1}
        \left(\frac{n_1+n_0-1}{n}\right)^{6 \left((n_0/2) \ln (n_0/2) + 1\right)}
    \\&\leq
        (n_0/2)
        \left(1-\frac{1}{n}\right)^{6 \left((n/4) \ln (n/4) + 1\right)}
    \\&\leq
        (n/4)
        \left(\exp\left(-\frac{1}{n}\right)\right)^{6 \left((n/4) \ln (n/4) + 1\right)}
    \\&=
        (n/4)
        \cdot
        e^{-(3/2) \ln (n/4) - 6}
    \\&=
        (n/4)
        \cdot
        n^{-3/2}
        \cdot
        e^{3/2 \ln 4 - 6}
    \\&\approx
        (0.004957)\cdot n^{-1/2}
    \\&\leq
        \frac{1}{200}\cdot n^{-1/2}.
    \\&\leq
        \frac{1}{200}.
\end{align*}

The original sum is thus bounded by $2/5+1/200 < 1/2$ for all
$n>0$, giving the claimed bound.
\end{proof}

\noindent{\textcolor{darkgray}{$\blacktriangleright$}\nobreakspace\sffamily\bfseries  Lemma \ref{lem-count-weak-property-sink1}.} \emph{Let $Count$ be a symmetric (silent or not) counting protocol correct under weak fairness (for any $\n\le P$). Consider any weakly fair execution $e = C_1, C_2, C_3, \ldots, C_j, \ldots$ of $Count$ on a population $\Agents$ of size $n< P$. There is an integer $k$ such that, for any $j\ge k$, no mobile agent is in a state $m\in Q$ such that there is a possible sequence of transitions of $Count$ $(m,m)\overset{*}{\rightarrow}(m,m)$.}
\begin{proof}
Let us assume, by contradiction, that there are infinitely many configurations in $e$ with a mobile agent in state $m$. Since there is a finite number of agents, there is a particular mobile agent $x$ in $\Agents$ which is in state $m$ in infinitely many configurations. Let $C_{j_1}, C_{j_2}, C_{j_3},\ldots$ be these configurations such that $e = e_1, C_{j_1}, e_2, C_{j_2}, e_3, C_{j_3},\ldots$ W.l.o.g., we choose these configurations such that, in every execution segment $e_i$, every agent in $\Agents$ interacts with every other  (this is possible with weak fairness).

Now consider a population $\Agents'=\Agents\cup \{x'\}$ of size $n+1$. To prove the lemma, we will construct a weakly fair execution $e'$ of $Count$ in population $\Agents'$ where no agent can distinguish $e'$ from $e$, and where consequently $Count$ wrongly counts only $n$ agents instead of the existing $n+1$.

We construct $e'$ based on $e$. 
First, we assume that in $e'$, $x'$ is in state $m$ in the starting configuration, and $e'=e'_1, C'_{j_1}, e^m, e'_2, C'_{j_2}, e^m, e'_3, C'_{j_3}, e^m,\ldots$ Every segment $e'_i$ follows exactly the same transition sequence as in $e_i$. In every segment $e'_{2r+1},C'_{j_{2r+1}}$ (for $r \ge 0$) the interactions are exactly the same as in $e_{2r+1},C_{j_{2r+1}}$, and $x'$ does not interact. However, in $e'_{2r},C'_{j_{2r}}$, all the interactions are as in $e_{2r+2},C_{j_{2r}}$, but the interactions with $x$. In this case, $x$ is replaced by $x'$ in the appropriate state, and $x$ does not interact. Finally, $e^m$ is an execution segment where only $x$ and $x'$ interact. They both start in state $m$, performing the sequence $(m,m)\overset{*}{\rightarrow}(m,m)$. The configurations at the beginning and at the end of $e^m$ are identical.  The construction of $e'$ ensures that in every $C'_{j_i}$, both $x$ and $x'$ are in the state $m$.

It is easy to verify that $e'$ is possible. In particular, this is because, at the end of every segment $e'_i, C'_{j_i}, e^m$, both $x$ and $x'$ are in the state $m$, so they can be exchanged in the following transitions of $e'_{i+1}$. Moreover, $e'$ is weakly fair, because $x'$ interacts with $x$ in every $e^m$; in every $e'_{2r+1}$ and $e'_{2r+2}$, $x$ and $x'$, respectively, interact with every other agent (by the assumption on $e_i$); and all the other agents interact with all the others infinitely often, by the later arguments and by weak fairness of $e$.

Finally, in $e'$, every agent from $\Agents$ (including BST), executes exactly the same sequence of transition rules as it does in $e$, so no agent can distinguish the fact that the population is actually $\Agents'$ with $n+1$ agents, and $Count$ counts only $n$ agents as it does in $e$. This is a contradiction to the assumption that $Count$ is a correct counting protocol.
\end{proof}

\noindent{\textcolor{darkgray}{$\blacktriangleright$}\nobreakspace\sffamily\bfseries  Proposition \ref{prop-count-weak-Ps}.} \emph{Any symmetric counting protocol correct for any $\n\le P$ (undistinguishable and non-initialized) mobile agents, under weak fairness, have to use at least $P$ states per mobile agent.}
\begin{proof}
As $Count$ is symmetric, any two agents, both in some state $s\in Q$, in an interaction, have to execute a symmetric transition of the form $(s,s)\rightarrow(s_1,s_1)$. Thus there is a possible sequence of transitions $(s,s)\rightarrow(s_1,s_1)\rightarrow(s_2,s_2)\rightarrow(s_3,s_3)\ldots$ As mobile agents are finite state, for some $j> i\ge 1$, $s_{i}=s_{j}$, i.e. $(s_i, s_i)\overset{*}{\rightarrow}(s_i,s_i)$.  By Lem. \ref{lem-count-weak-property-sink1}, $s_i$ does not appear infinitely often in executions with $\n< P$. As there are at least $P-1$ states appearing infinitely often in an execution with $\n= P-1$ (by Proposition \ref{prop-count-weak-property-naming}), and there is at least one such state $s_i$, there are at least $P$ distinct states to be maintained by mobile agents.
\end{proof}

\begin{proposition}\label{prop-2P:time}
The convergence time of the space optimal protocol under weak fairness presented in \cite{DBLP:conf/wdag/BeauquierBCS15}  is $\Theta(2^\n)$ non-null transitions. 
\end{proposition}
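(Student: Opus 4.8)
The statement is a tight bound, so the plan is to prove the matching lower and upper bounds separately and then combine them.

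\textbf{Lower bound ($\Omega(2^\n)$).} First I would observe that the protocol of \cite{DBLP:conf/wdag/BeauquierBCS15} is exactly a symmetric, semi-uniform, space-optimal ($P$-state) counting protocol correct under weak fairness, so it satisfies all the hypotheses of Theorem \ref{th-count-weak-tcomplexity}. Hence its convergence time is at least $2^\n-1$ non-null transitions, and the $\Omega(2^\n)$ half comes for free. No further work is needed here beyond checking that the cited protocol indeed meets the hypotheses.

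\textbf{Upper bound ($O(2^\n)$).} The substantive part is to show that \emph{no} weakly fair execution of this particular protocol exceeds $O(2^\n)$ non-null transitions before convergence (recall the convergence time under weak fairness is the maximum over fair executions). I would recall the mechanism of the protocol: BST assigns names (the $P-1$ states distinct from the sink $m$) to the agents it meets in state $m$, following its unique naming sequence $U$ (Def.~\ref{def:weak-fairness-bound}), while pairs of homonyms get reduced to $m$ (Proposition~\ref{prop-count-weak-property-sink2}). The key structural observation is that the reduced configuration, encoded by the set $R(C)$ of names present an odd number of times, behaves like a binary counter: each naming transition performs the increment $R(C)\mapsto R(C)\triangle\{s_j\}$ dictated by $U$, and each homonym reduction clears the corresponding name. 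Convergence is reached exactly when $R(C)=Q\setminus\{m\}$, i.e. when the counter attains its maximal value. I would then split the non-null transitions into naming transitions and reduction transitions. The naming transitions number at most the length of the relevant prefix of $U$, which for this protocol is $O(2^\n)$, matching the counting argument of Theorem~\ref{th-count-weak-tcomplexity} (there are $2^{P-1}-1$ possible reduced starting configurations and $U$ resolves them one at a time).

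The remaining task, and the main obstacle, is to bound the reduction transitions, since a weakly fair adversary may interleave homonym reductions arbitrarily. Here I would run an amortized argument in the style of the analysis of a $(P-1)$-bit binary counter, whose total number of bit flips over its entire range is $\Theta(2^{P-1})$: each name that must be cleared to carry the counter from one value to the next is charged to the increment that produced it, bounding the reductions by $O(2^\n)$ as well. The delicate point to verify carefully is that the adversary cannot inflate the reduction count beyond this budget, for instance by repeatedly creating and destroying the same homonym without advancing the counter. I would rule this out using Lemma~\ref{lem:count-weak-nonstat} together with the monotone-progress invariant of the protocol: BST never enters a stationary state before all $\n$ agents are named, so every maximal homonym-free stretch advances $R(C)$ toward $Q\setminus\{m\}$, and no name can be cleared more times than it is set along the finite naming sequence. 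Summing the $O(2^\n)$ naming transitions and the $O(2^\n)$ reduction transitions yields the upper bound, and together with the lower bound this gives $\Theta(2^\n)$ non-null transitions.
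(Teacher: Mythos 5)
Your proposal is correct and follows essentially the same route as the paper: the lower bound is inherited from Theorem~\ref{th-count-weak-tcomplexity}, and the upper bound splits the non-null transitions into naming transitions (at most the length $2^{\n}-1$ of the Gros naming sequence $\U_\n$) and homonym reductions, which are charged to naming events. The only difference is that where you set up an amortized binary-counter analysis and invoke Lemma~\ref{lem:count-weak-nonstat} to handle the ``delicate point,'' the paper disposes of the reductions with the one-line observation that both agents in a reduction move to the $m$-state and hence their next non-null transition must involve BST, so reductions cannot outnumber the $O(2^{\n})$ BST transitions.
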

\begin{proof}
Following the study of the Gros sequence $\U_{n}$, in \cite{DBLP:conf/wdag/BeauquierBCS15},
used by the protocol to name $n$ agents ($\U_n\equiv\U_{n-1}, n, \U_{n-1}$, where $\U_1\equiv 1$), the number of terms in $\U_{n}$ is $2^n-1$. In consequence, the number of non-null transitions, before convergence, between BST and an agent in state $0$ (the $m$-state) or $>n$ (whom BST ``gives'' a new name), is at most $2^{\n}$.
Other possible non-null transitions are between homonyms. Each agent in such a transition changes its state to 0, and its next non-null transition necessarily involves BST. 
Thus, there cannot be more than $2^{\n}$ non-null transition between homonyms. Hence, 
the proposition follows.
%
\end{proof}

\section*{Convergence Time Analysis of Protocol \ref{alg:counting2s} \cite{DBLP:conf/wdag/BeauquierBCS15}} 

\begin{algorithm}[h]
\caption{-- Space Optimal Counting under Global or Probabilistic Fairness \cite{DBLP:conf/wdag/BeauquierBCS15}}
\begin{algorithmic}
    \STATE
	\STATE \textbf{Variables at BST:}
    \INDSTATE $\size_0$: non-negative integer, initialized to $0$; 
    \INDSTATE $\size_1$: non-negative integer, initialized to $0$; 
	\INDSTATE $\sizeT$: non-negative integer initialized to $0$; eventually holds $\n$
	\STATE \textbf{Variable at a mobile agent $x$:}
	\INDSTATE $b$: in $\{0,1\}$, initialized \emph{arbitrarily}
	\STATE
\end{algorithmic}

\begin{algorithmic}[1]

\WHEN{\underline{a mobile agent $x$ interacts with BST}}
	\IF{$\size_{b}>0$}
		\STATE $\size_{b} \leftarrow \size_{b}-1$\label{l-2s:size-mark--}
	\ENDIF
	\STATE $b \leftarrow 1 - b$ \label{l-2s:change-mark}
    \STATE $\size_{b} \leftarrow \size_{b}+1$ \label{l-2s:size-mark++}
    \STATE $\sizeT\leftarrow \size_{0}+\size_{1}$\label{l-2s:size-update}
\ENDWHEN
\end{algorithmic}\label{alg:counting2s}
\end{algorithm}
%
%
We evaluate the convergence time of Protocol \ref{alg:counting2s} in terms of the average number of transitions, assuming probabilistic fairness (uniformly random interactions). Note that, in case of this protocol, any transition involves BST. 
Below, we sketch the analysis and after we give the details. 
%


To compute the convergence time, we use the observation (stated by Lemma \ref{lm-2s:for-time} appeared and proven below) 
that Protocol \ref{alg:counting2s} must first reach a configuration with all mobile agents in the same state, and then a configuration with all the agents in the other state (recall that there are only two mobile agent states).

Thus, consider a population of $\n$ agents, and let $t_k$ be the average number of transitions that happen before all agents are in state 0, starting from a configuration with $k$ agents in state 1. Then, $t_0=0$ trivially. For $1\leq k\leq \n-1$, $t_k=1+\frac kn t_{k-1}+\frac{\n-k}\n t_{k+1}$. This is because, at the current step, there are $k$ chances out of $\n$ that an agent in state 1 meets BST, leading to a configuration with $k-1$ agents in state 1; and $\n-k$ chances out of $\n$ that an agent in state 0 interacts with BST resulting in a configuration with $k+1$ agents in state 1. Finally, $t_\n=1+t_{\n-1}$.

From that we have $t_\n=2^{\n-1}\sum_{k=0}^{\n-1}\frac{1}{{\n-1\choose k}}=2^\n+o(2^\n)$, 
and the average convergence time of Protocol \ref{alg:counting2s} is $\Theta(2^\n)$ transitions. The formal proof is below.

\begin{lemma}
\label{lm-2s:for-time}
In the first configuration $C^1$ after the convergence of Protocol \ref{alg:counting2s}, i.e., the first time when $\sizeT=\n$ (and does not change after), all agents have the same mark $m \in \{0,1\}$. Moreover, there is a configuration $C^0$ s.t. $C^0\overset{*}{\rightarrow} C^1$, and all agents in $C^0$ are in state $1-m$.
\end{lemma}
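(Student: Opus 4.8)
The plan is to analyze Protocol~\ref{alg:counting2s} by understanding exactly when and how the counter $\sizeT$ reaches $\n$. The key structural observation is about the protocol's invariants. Recall that every non-null transition involves BST interacting with some mobile agent $x$ having mark $b$: the protocol decrements $\size_b$ (if positive), flips $x$ to $1-b$, and increments $\size_{1-b}$. As established in \cite{DBLP:conf/wdag/BeauquierBCS15} (Lemma~1 there, referenced in Sec.~\ref{sec:time-space-opt-prot}), the invariant $\size_b \leq n_b$ holds throughout, where $n_b$ is the true number of $b$-agents, and $\size = \size_0 + \size_1$ is non-decreasing. Convergence means $\size = \n$ for the first time (and forever after). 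First I would argue that $\size = \n$ forces both $\size_0 = n_0$ and $\size_1 = n_1$ simultaneously, since $\size_0 \le n_0$, $\size_1 \le n_1$, and $\size_0 + \size_1 = n_0 + n_1$.

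Next I would show that in $C^1$ all agents share the same mark. The strategy is to track what each $\size_b$ register actually counts. The natural claim is that $\size_b$ counts the number of agents whose \emph{last interaction with BST} flipped them \emph{into} state $b$ (equivalently, agents BST has ``certified'' as being in state $b$). An agent contributes to $n_b$ but not to $\size_b$ precisely when it currently carries mark $b$ but BST has not yet certified it in state $b$ --- i.e., it either never interacted with BST, or its certification is stale because it was flipped but the corresponding $\size$ decrement was suppressed by the $\size_b > 0$ guard. I would make this precise and conclude that $\size_b = n_b$ for \emph{both} $b$ can hold only when every agent has been certified in its current state, which by the flipping mechanics forces a uniform configuration: the last certification flipped each agent into the common mark $m$. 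The main obstacle here is handling the asymmetry introduced by the conditional decrement (the \texttt{if $\size_b > 0$} guard), which allows $\size_b$ to lag behind, so I would need a careful accounting argument rather than a one-line invariant.

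For the second part --- the existence of $C^0$ with all agents in state $1-m$ reachable-backward to $C^1$ --- I would reason about the immediately preceding history. Since $\size$ is non-decreasing and jumps from $\n-1$ to $\n$ exactly at the transition producing $C^1$, consider the last interaction before $C^1$. In $C^1$ all agents carry mark $m$, and this final transition flipped some agent into $m$ (incrementing $\size_m$ to reach the total $\n$). I would argue that to have certified all $\n$ agents into the single state $m$, BST must at some earlier point have had all agents certified into $1-m$; concretely, each agent can be flipped from $1-m$ to $m$ by BST only after being (or while being) in state $1-m$, and the non-decreasing, alternating flip structure means the ``all-$m$'' certified configuration is preceded by an ``all-$(1-m)$'' configuration. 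I would exhibit $C^0$ as the configuration at the moment just before BST began the final sweep converting the last batch of $(1-m)$-agents into $m$-agents, verifying $C^0 \overset{*}{\rightarrow} C^1$ directly from the transition sequence.

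The step I expect to be hardest is pinning down the precise semantics of the $\size_b$ registers under the suppressed-decrement guard and proving rigorously that $\size_0 = n_0 \wedge \size_1 = n_1$ is unreachable except from a uniform configuration; the backward-reachability of $C^0$ should then follow comparatively cleanly from the flip dynamics once the register semantics are established.
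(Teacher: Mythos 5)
There is a genuine gap in both halves of your plan. For the first claim (uniformity of $C^1$), the route through an exact ``certification'' semantics for the registers does not work, and it also overlooks a two-line direct argument. The paper's proof simply observes that the transition producing $C^1$ must strictly increase $\sizeT$, and an interaction with a $b$-agent increases $\sizeT$ exactly when $\size_b=0$ at that moment (the decrement is skipped); hence immediately after this transition $\size_b=0$ and $\size_{1-b}=\n$, and the invariant $\size_{1-b}\le n_{1-b}\le \n$ forces $n_{1-b}=\n$, i.e.\ all agents carry $1-b$. Your proposed intermediate claims are false: $\size_b$ does \emph{not} equal the number of agents whose last BST-interaction flipped them into $b$, because the conditional decrement can be ``charged'' to the wrong agent (with $n=2$, $A=0$, $B=1$: meeting $A$ gives $\size_1=1$; meeting $B$ then decrements $\size_1$ to $0$ even though $A$ is still a freshly-flipped $1$-agent). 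And even if every agent were certified in its current state, that would not force uniformity (in the same example, after the two meetings $A$ and $B$ are both ``certified'' but carry different marks). So the chain ``$\size_0=n_0\wedge\size_1=n_1\Rightarrow$ all certified $\Rightarrow$ uniform'' breaks at both links; note also that non-uniform configurations with $\size_0=n_0>0$ and $\size_1=n_1>0$ \emph{are} reachable after convergence, so no argument can rule them out globally --- one must exploit that $C^1$ is the \emph{first} such configuration, which is exactly what the increase of $\sizeT$ gives you.

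For the second claim (existence of $C^0$), you assert that the all-$m$ configuration ``is preceded by an all-$(1-m)$ configuration'' by the ``alternating flip structure,'' but this is the statement to be proved, not an argument. The paper closes this by a counting argument: let $C^0$ be the configuration at the \emph{last} transition before $C^1$ with $\size_m=0$. Between $C^0$ and $C^1$ every flip out of $m$ finds $\size_m>0$ and decrements it, and every flip into $m$ increments it, so if $r$ flips leave state $m$ then $\n+r$ flips enter it; the net change in the true count $n_m$ over the interval is therefore $\n$, and since $n_m\le\n$ at $C^1$, we must have had $n_m=0$ at $C^0$, i.e.\ all agents in state $1-m$ there. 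Your sketch of ``the moment just before the final sweep'' does not identify this configuration or supply the accounting that makes it work.
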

\begin{proof}
Thus, the counting is achieved in $C^1$. This happens following an interaction of BST with
an agent, let us say w.l.o.g., in state 0. By definition, $\sizeT=\size_0+\size_1=\n$, and since $\sizeT$ increases in this interaction, we had and we still have $\size_0=0$ (otherwise, $\sizeT$ cannot increase). Then, after this transition, $\size_1$ becomes $\n$.

We show now that at the last transition with $\size_1=0$, before $C^1$ has been reached (at least the first step is such), all agents were in state 0 (this will prove the existence of $C^0$). Denote by $r$ the number of 1-0 transitions (transitions changing a state of a mobile agent from 1 to 0). Then, the number of 0-1 transitions is $\n+r$, since 1-0 transitions increment $\sizeT$, and 0-1 ones decrement it ($\size_1$ never goes to 0, by assumption). Thus, BST meets $\n+r$ agents in state 0 that turn to 1, and $r$ in state 1 that turn to 0. This creates $\n$ new agents in state 1 thus, at the step when $\size_1=0$, all agents were in state 0.
\end{proof}

Thus, consider a population of $\n$ agents, and let $u_k$ be the average number of transitions that happen before all agents are in state 0, starting from a configuration with $k$ agents in state 1.

We have the following relations:\begin{itemize}
\item $u_0=0$ by definition;
\item for $1\leq k\leq \n-1$, $u_k=1+\frac kn u_{k-1}+\frac{\n-k}\n u_{k+1}$: at the current step, there is $k$ chances in $\n$ that an agent with mark 1 meets BST, leading to a configuration with $k-1$ agents with mark 1, and $\n-k$ chances in $\n$ that an agent marked 0 interacts with BST resulting in a configuration with $k+1$ agents marked 1;
\item $u_\n=1+u_{\n-1}$.
\end{itemize}

For $0\leq k\leq \n-1$, set $v_k=u_{k+1}-u_k$. We have:\begin{itemize}
\item $v_{\n-1}=1$
\item $\forall 1\leq k\leq \n-1, v_{k-1}=u_{k}-u_{k-1}=u_k-\frac{\n}{k}\left(u_{k}-1-\frac{\n-k}\n u_{k+1}\right)=\frac{k-\n}{k}u_{k}+\frac{\n}{k}+\frac{\n-k}{k}u_{k+1}=\frac{\n-k}{k}v_k+\frac{\n}{k}$
\end{itemize}

Thus, for $0\leq k\leq \n-1$

$$v_{\n-k}=\prod_{i=\n-k+1}^{\n-1}\frac{\n-i}i+\sum_{i=1}^{k-1}\prod_{j=i}^{k}\frac{j}{\n-j}\frac{\n}{i}=\frac{(\n-k)!(k-1)!}{(\n-1)!}+\n\sum_{i=1}^{k-1}\frac{(k-1)!(\n-k)!}{i!(\n-i)!}=$$
$$=\n\sum_{i=0}^{k-1}\frac{(k-1)!(\n-k)!}{i!(\n-i)!}$$

$$v_{\n-k}=\frac1{{\n-1\choose k-1}}+\n\sum_{i=1}^{k-1}\frac{(k-1)!(\n-k)!}{\n!}\frac{\n!}{i!(\n-i)!}=\frac1{{\n-1\choose k-1}}+\sum_{i=1}^{k-1}\frac{{\n\choose i}}{{\n-1\choose k-1}}=\frac{\sum_{i=0}^{k-1}{\n\choose i}}{{\n-1\choose k-1}}$$

From that, we get:

$$u_\n=\sum_{k=0}^{\n-1}v_k=\sum_{k=0}^{\n-1}\frac{\sum_{i=0}^{k}{\n\choose i}}{{\n-1\choose k}}$$

First, consider the case when $\n$ is even:
$$u_\n=\sum_{k=0}^{\n/2-1}\frac{\sum_{i=0}^{k}{\n\choose i}}{{\n-1\choose k}}+\sum_{k=\n/2}^{\n-1}\frac{\sum_{i=0}^{k}{\n\choose i}}{{\n-1\choose k}}$$

$$u_\n=\sum_{k=0}^{\n/2-1}\frac{\sum_{i=0}^{k}{\n\choose i}}{{\n-1\choose k}}+\sum_{k=\n/2}^{\n-1}\frac{2^\n-\sum_{i=k+1}^{\n}{\n\choose i}}{{\n-1\choose k}}$$
since $\sum_{i=0}^{\n}{\n\choose i}=2^\n$
$$u_\n=\sum_{k=0}^{\n/2-1}\frac{\sum_{i=0}^{k}{\n\choose i}}{{\n-1\choose k}}+\sum_{k=\n/2}^{\n-1}\frac{2^\n-\sum_{i=0}^{\n-k-1}{\n\choose \n-i}}{{\n-1\choose \n-k-1}}$$
by setting $i'=\n-i$
$$u_\n=\sum_{k=0}^{\n/2-1}\frac{\sum_{i=0}^{k}{\n\choose i}}{{\n-1\choose k}}+\sum_{k=\n/2}^{\n-1}\frac{2^\n-\sum_{i=0}^{\n-k-1}{\n\choose i}}{{\n-1\choose \n-k-1}}$$
$$u_\n=\sum_{k=0}^{\n/2-1}\frac{\sum_{i=0}^{k}{\n\choose i}}{{\n-1\choose k}}+\sum_{k=0}^{\n/2-1}\frac{2^\n-\sum_{i=0}^{k}{\n\choose i}}{{\n-1\choose k}}$$
by setting $k'=\n-k-1$
$$u_\n=2^\n\times\sum_{k=0}^{\n/2-1}\frac{1}{{\n-1\choose k}}$$
$$u_\n=2^{\n-1}\times\sum_{k=0}^{\n-1}\frac{1}{{\n-1\choose k}}$$

The case when $\n$ is odd is similar:
$$u_\n=\sum_{k=0}^{(\n-3)/2}\frac{\sum_{i=0}^{k}{\n\choose i}}{{\n-1\choose k}}+\frac{\sum_{i=0}^{(\n-1)/2}{\n\choose i}}{{\n-1\choose (\n-1)/2}}+\sum_{k=(\n+1)/2}^{\n-1}\frac{\sum_{i=0}^{k}{\n\choose i}}{{\n-1\choose k}}$$
And, similarly
$$u_\n=\sum_{k=0}^{(\n-3)/2}\frac{\sum_{i=0}^{k}{\n\choose i}}{{\n-1\choose k}}+\sum_{k=0}^{(\n-3)/2}\frac{2^\n-\sum_{i=0}^{k}{\n\choose i}}{{\n-1\choose k}}+\frac{\sum_{i=0}^{(\n-1)/2}{\n\choose i}}{{\n-1\choose (\n-1)/2}}$$
by setting $k'=\n-k-1$
$$u_\n=2^\n\times\sum_{k=0}^{(\n-3)/2}\frac{1}{{\n-1\choose k}}+2^{\n-1}\times\frac{1}{{\n-1\choose (\n-1)/2}}$$
$$u_\n=2^{\n-1}\times\sum_{k=0}^{\n-1}\frac{1}{{\n-1\choose k}}$$

Now, for $2\leq k\leq \n-3$, $\frac{1}{{\n-1\choose k}}\leq \frac{1}{{\n-1\choose 2}}=\frac{2}{(\n-1)(\n-2)}$, so that
$$2=\frac{1}{{\n-1\choose 0}}+\frac{1}{{\n-1\choose \n-1}}\leq\sum_{k=0}^{\n-1}\frac{1}{{\n-1\choose k}}\leq 2+\frac{2}{\n-1}+\frac{2(\n-4)}{(\n-1)(\n-2)}=2+O\left(\frac1\n\right)$$

Thus, $u_\n\geq 2^\n$, and $u_\n\sim_{\n\rightarrow\infty}2^\n$.

The average complexity of the protocol is $\Theta(2^\n)$. The best starting configurations, for the complexity, is when all agents have the same mark. The average complexity is then $2^\n+o(2^\n)$. Starting from any other configuration, the protocol first has to reach a configuration where all agents have identical marks. This takes less than $2^\n+o(2^\n)$ transitions, since starting with all agents having the same mark, and switching it, makes the protocol traverse all configurations. Hence, in this case, the overall complexity is less than $2\times (2^\n+o(2^\n))$.

\end{document}